\def\definesto#1{\stackrel{#1}{\hookrightarrow}}
\def\pto#1{\stackrel{#1}{\Rightarrow}}
\def\goesto#1{\stackrel{#1}{\longrightarrow}}
\newcommand{\undefd}{\ensuremath{\bot}}
\newcommand{\naturals}{\ensuremath{\mathbb{N}}}
\newcommand{\integers}{\ensuremath{\mathbb{Z}}}
\newcommand{\rationals}{\ensuremath{\mathbb{Q}}}
\newcommand{\integerswithinfty}{\ensuremath{\mathbb{Z}_\infty}}
\newcommand{\goes}[1]{\ensuremath{\stackrel{#1}{\longrightarrow}}}
\newcommand{\domain}{\ensuremath{D}}
\newcommand{\combine}{\ensuremath{\oplus}}
\newcommand{\combineBig}{\ensuremath{\bigoplus}}
\newcommand{\extend}{\ensuremath{\otimes}}
\newcommand{\combineNeutral}{\ensuremath{\overline{0}}}
\newcommand{\extendNeutral}{\ensuremath{\overline{1}}}
\newcommand{\vX}{\vec X}
\newcommand{\vf}{\vec f}
\newcommand{\vzero}{\vec \combineNeutral}
\newcommand{\gfp}{{\mathit{gfp}}}
\newcommand{\vv}{\vec v}
\newcommand{\vectors}{V}
\newcommand{\yi}{\mathsf{Y}}
\newcommand{\vkappa}{\boldsymbol{\kappa}}
\newcommand{\vlambda}{\boldsymbol{\lambda}}
\newcommand{\vmu}{\boldsymbol{\mu}}
\newcommand{\ks}[1]{\vkappa^{(#1)}}
\def\A{{\mathcal{A}}}
\def\places{P}
\def\stackalph{\Gamma}
\def\transitionrules{\Delta}
\def\wpds{\mathcal{W}}
\def\ring{\mathcal{S}}
\def\weight{d}
\def\set#1{\{#1\}}
\newcommand{\todo}[1]
{\marginpar{\baselineskip0ex\rule{2,5cm}{0.5pt}\\[0ex]{\tiny\textsf{#1}}}}
\newcommand{\node}[1]{*+++[o][F-]{\makebox(3,3){$#1$}}}
\newcommand{\nodel}[1]{*+++[o][F-]{\makebox(12,12){$#1$}}}
\newcommand{\nodedouble}[1]{*+++[o][F=]{\makebox(3,3){$#1$}}}
\newcommand{\nodedoublel}[1]{*+++[o][F=]{\makebox(12,12){$#1$}}}
\def\sema#1{\lbrack{#1}\rbrack}
\def\sint{\ring_{\mathit int}}
\title{Interprocedural Dataflow Analysis over Weight Domains with
Infinite Descending Chains\thanks{The
second and fourth authors are supported in part by the DFG project {\em Algorithms for Software Model Checking}.
 The third author is supported in part by Institute for Theoretical Computer Science, project No.~1M0545.}}
\author{
Morten K\"{u}hnrich\inst{2} \and
Stefan Schwoon\inst{1} \and
Ji\v{r}\'{\i} Srba\inst{2} 
\and
Stefan Kiefer\inst{1}}
\institute{
Technische Universit\"{a}t M\"{u}nchen \\ Boltzmannstr. 3, 85748 Garching,
Germany \\ \email{\{kiefer,schwoon\}@in.tum.de}\and
Department of Computer Science, Aalborg University \\
  Selma Lagerl\"{o}fs Vej 300, 9220 Aalborg East, Denmark \\
  \email{\{mokyhn,srba\}@cs.aau.dk} \\
}
\begin{document}

\maketitle
\begin{abstract}
We study generalized fixed-point equations over idempotent semirings
and provide an efficient algorithm for the detection whether a sequence
of Kleene's iterations stabilizes after a finite number of steps.
Previously known approaches considered only bounded semirings
where there are no infinite descending chains. The main novelty
of our work is that we deal with semirings without the
boundedness restriction. Our study is motivated by several applications
from interprocedural dataflow analysis. We demonstrate how the
reachability problem for weighted pushdown automata can be reduced
to solving equations in the framework mentioned above and we
describe a few 
applications to demonstrate its usability.
\end{abstract}

\section{Introduction}

Weighted pushdown systems~\cite{RSJM05} are a suitable model for analyzing
programs with procedures. They have been used successfully in a number of
applications, e.g.\
BDD-based model checking~\cite{SBSE07,EKS06b},
trust-management systems~\cite{JSWR06},
path optimization~\cite{LLPL06},
and interprocedural data\-flow analysis (see~\cite{RLK07} for a survey).

The main idea is that the
transitions of a pushdown system are labelled with values from a given
data domain (e.g. natural numbers). These values can be composed when
executed in sequence (e.g. using the addition on natural numbers)
and one is then interested in a number of verification questions like
reachability of a given configuration with the combined value
over all paths leading into this configuration (e.g. by taking the
minimum value over all such paths). It has been shown that there
are efficient polynomial time
algorithms for answering these questions~\cite{RSJM05}.

In this paper, we contribute to the research in this area. 
We first draw a connection between reachability in weighted
pushdown systems (WPDS) over an idempotent semiring and solving fixed-point
equations over the same semiring.
Unlike related work, we allow for infinite descending
chains in our semirings (our approach e.g. includes the integer semiring).
Due to this reason, the system of equations constructed from a WPDS may
not have a solution. We therefore provide an efficient algorithm
that either determines the solution or detects the presence of an infinite
descending chain. In the latter case, we output
some component (variable) of the system affected by the problem.
So on one hand we treat domains with infinite descending chains
but on the other hand, two restrictions are necessary to make
this possible. However, as argued in Section~\ref{sec:apps},
the framework still includes a number of interesting applications.

For better readability some proofs have been moved to an appendix.


\subsection{Dataflow Analysis and Fixed-Point Equations}
\label{sub:intro-dataflow}

Static analysis gathers information about a program without executing
it. Data\-flow analysis is an instance of static analysis: it reasons
about run-time values of variables or expressions. 
More to the point, we desire to establish facts that hold at some control
point whenever an execution reaches it.

Most approaches to dataflow analysis reduce the problem
(explicitly or implicitly) to solving
a system of fixed-point equations over some algebraic structure, e.g.\ a
lattice or a semiring.
They map the control-flow graph of a program to an equation system
\mbox{$\vX = \vf(\vX)$}, where
the vector $\vX = (X_1, \ldots, X_n)$ stands for the nodes in the control flow
graph, and takes values from some dataflow domain.
The vector $\vf = (f_1, \ldots, f_n)$ stands for the edges in the
graph, i.e., the {\em transfer function}~$f_i(\vX)$
describes the effect of the program on~$X_i$ in terms of the other dataflow
values. Under certain conditions (e.g., the functions~$f_i$ are
distributive) the desired dataflow information is precisely the
greatest solution of the system \mbox{$\vX = \vf(\vX)$},
i.e., the greatest fixed point $\gfp(\vf)$ of~$\vf$ \cite{NNH99,SP81}.

There is a large body of literature dealing with dataflow analysis along
these lines. Of particular interest to us are interprocedural
analyses. The seminal work of Sharir and Pnueli~\cite{SP81} shows how to
set up an equation system that captures only the \emph{interprocedurally valid}
paths, i.e.\ those paths in which all return statements lead back to the
site of the most recent call. However, \cite{SP81} computes only one dataflow
value for each program point, merging together all the paths that reach it,
regardless of the calling context. In \cite{RSJM05} a generalization was provided,
where the solution of the equations computes a solution for each \emph{configuration}, where configuration denotes a program point together with its calling
context. Thus, \cite{RSJM05} allows to distinguish dataflow values for
different, arbitrary calling contexts. (The merged information can still be
obtained as a special case.)  The results of \cite{RSJM05} were phrased in
terms of weighted pushdown systems (WPDS), and we will adopt this notion in
our paper.

If the dataflow domain satisfies the so-called
{\em descending chain condition}
(i.e. each infinite descending chain eventually becomes stationary),
$\gfp(\vf)$ can be obtained by {\em Kleene's iteration}:
Let $\combineNeutral$ be the greatest domain element, and
$\vzero = (\combineNeutral, \ldots, \combineNeutral)$.
Then Kleene's fixed-point theorem guarantees that
the sequence $\vzero, \vf(\vzero), \vf(\vf(\vzero)), \ldots$ reaches
$\gfp(\vf)$ after finitely many steps. Both \cite{SP81} and \cite{RSJM05}
require the descending chain condition.

However, the descending chain condition does not always hold.
For example, the lattice of non-positive integers with $\sqcap = \min$
and $\sqcup = \max$
does not satisfy the condition because of the infinite
descending chain $0, -1, -2, \ldots$.
In fact, this chain arises when doing Kleene's iteration on the equation
$X = f(X)$ where $f(X) = \min(X,X-1)$. More to the point, Kleene's iteration
on $f$ would fail to terminate.
We will show how to overcome this problem.

Previous work (e.g., \cite{RSJM05})
has shown that many important analysis problems
can be phrased as equation systems, where $\vf(\vX)$ contains polynomials
over {\em idempotent semirings}.
By polynomial, we mean an expression that is built up from variables, constant elements,
 and the semiring operations `$\combine$' (combine) and `$\extend$' (extend).

Recently, fixed-point equations over idempotent semirings
have been studied intensively. While the classical solution is to use
Kleene's iteration or chaotic iteration, recent work has
proposed faster algorithms and better convergence results based on
Newton's method~\cite{HK99,EKL07:stacs,EKL07:dlt,EKL08:icalp}.
In these works, the boundedness condition is dropped, but replaced
by another condition called $\omega$-continuity, requiring that the
infimum of every infinite set exists, thus ensuring that a greatest
fixed point can always be found. Our work does not require this condition,
and a greatest fixed point is not always guaranteed to exist (but our
algorithm detects such a case and reports it). The penalty for this
is that a different kind of restriction has to be introduced:
we require that semirings are
totally ordered and that ``extend preserves
inequality'', i.e., $a\extend c\ne b\extend c$ for $a\ne b$ and
$a,b,c\ne\combineNeutral$.

Our algorithm executes Kleene's
iteration, and if the iteration terminates, it outputs the greatest
fixed point. If Kleene's iteration fails to terminate, our algorithm will
detect this and still terminate, indicating a responsible variable (a so-called
\emph{witness component}).

The work closest to ours is the one by Gawlitza and
Seidl~\cite{GawlitzaSeidl:Precise-Fixpoint}, who consider systems
of equations over the integer semiring. Our algorithm can be seen
as a generalization of one of their algorithms 
to totally ordered semirings where extend preserves inequality
and to equations over arbitrary polynomials. Moreover, we provide
a direct and self-contained proof of the result.
Another related work is by Leroux and Sutre~\cite{LS:SAS:07}.
They present 
an algorithm for computing least fixed-points
for monotone \emph{bounded-increasing} functions over integers.
On one hand they consider more general functions
like e.g. factorials, on the other hand the minimum and maximum
functions are not bounded-increasing according to their definition.
As a result, their algorithm
is not applicable in our setting of weighted pushdown systems.

We proceed as follows:
In Section~\ref{sec:eq}, we provide a new algorithm
for solving fixed-point equations.
Using this result, we design
a new algorithm for interprocedural dataflow analysis in Section~\ref{sec:wpds},
which is based on WPDS~\cite{RSJM05} and still requires a polynomial
number of semiring operations. Like previous work on WPDS,
the algorithm allows to compute dataflow information for each
configuration (if desired). Due to the properties of the systems we
handle, our algorithm either returns a solution (if it exists) or reports
that none exists (usually indicating an error in the program).
We provide several applications of our theory in Section~\ref{sec:apps}.

\newcommand{\vars}{\ensuremath{\mathcal{X}}}
\newcommand{\vg}{\vec g}
\newcommand{\vu}{\vec u}
\section{Fixed-Point Equations over Idempotent Semirings}
\label{sec:eq}

In this section we shall study fixed-point equations over idempotent
semirings and Kleene's iterations over vectors of polynomials.

%
\begin{definition}[Idempotent Semiring] \label{def:semiring}
An \emph{idempotent semiring} is a 5-tuple
$\ring=(\domain, \combine, \extend, \combineNeutral, \extendNeutral)$
where $\domain$ is a set called the \emph{domain},
$\combineNeutral, \extendNeutral \in \domain$, and
the binary operators \emph{combine} `$\combine$'
and  \emph{extend} `$\extend$' on $\domain$ satisfy:
\begin{enumerate}
\item $(\domain, \combine)$ is a commutative monoid with $\combineNeutral$
as its neutral element and $(\domain, \extend)$ is a monoid
with $\extendNeutral$ as its neutral element,
\item extend distributes over combine, i.e.,
$\forall a, b, c \in \domain:
a \extend (b \combine c) = (a \extend b) \combine (a \extend c)$ and
$(a \combine b) \extend c = (a \extend c) \combine (b \extend c)$,
\item $\combineNeutral$ is an annihilator for extend, i.e.,
$\forall a \in \domain: a \extend \combineNeutral = \combineNeutral \extend a = \combineNeutral$, and
\item every $a \in \domain$ is idempotent w.r.t. combine, i.e., $\forall a \in \domain: a \combine a = a$.
\end{enumerate}
\end{definition}

\begin{definition}[Ordering]\label{def:ordering}
We write $a \sqsubseteq b$ for $a,b \in \domain$ whenever $a \combine b = a$.
\end{definition}

As we are mainly interested in algorithmic verification approaches,
we shall implicitly consider only \emph{computable} semirings
where the elements from the domain are effectively representable,
operations combine and extend are algorithmically computable and
the test on equality is decidable. We will
use the big-$O$-notation for complexity upper-bounds, though
it should be always interpreted relative to the complexity
of the semiring operations. In the semirings considered
in our applications,
we can assume that all operations can be performed in $O(1)$ time.
Hence the big-$O$-notation for the semirings mentioned in this
paper corresponds to the standard asymptotic complexity.

\begin{lemma} \label{lem:properties}
(i) For all $a,b \in \domain$ it holds that $a \combine b \sqsubseteq a$.
(ii) For all $a,b,c \in \domain$ it holds that if
$a \sqsubseteq b$ then $a \extend c \sqsubseteq b \extend c$.

\end{lemma}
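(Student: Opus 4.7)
The plan is to prove both parts directly from the definitions, unfolding the ordering $\sqsubseteq$ into its combine-based formulation and then applying the semiring axioms of Definition~\ref{def:semiring}.

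For part (i), I would start by observing that the goal $a \combine b \sqsubseteq a$ unfolds, via Definition~\ref{def:ordering}, to the equation $(a \combine b) \combine a = a \combine b$. The strategy is then purely algebraic: use the commutativity and associativity of combine (given by axiom~1, since $(\domain,\combine)$ is a commutative monoid) to regroup the left-hand side as $(a \combine a) \combine b$, and then apply idempotence (axiom~4) to collapse $a \combine a$ to $a$, yielding $a \combine b$, as required.

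For part (ii), I would again start by unfolding both hypothesis and goal through Definition~\ref{def:ordering}. The hypothesis $a \sqsubseteq b$ gives $a \combine b = a$, and the goal $a \extend c \sqsubseteq b \extend c$ amounts to $(a \extend c) \combine (b \extend c) = a \extend c$. The key step is to apply right-distributivity of extend over combine (axiom~2) in reverse, rewriting $(a \extend c) \combine (b \extend c)$ as $(a \combine b) \extend c$. Substituting the hypothesis $a \combine b = a$ then yields $a \extend c$, closing the argument.

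Neither part presents a real obstacle; both are short manipulations using only the monoid structure, distributivity, and idempotence. The only subtlety worth flagging is making sure to use the \emph{right} distributivity law (the second half of axiom~2) rather than the left one in part (ii), since the common factor $c$ sits on the right of both extend operations.
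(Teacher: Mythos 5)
Your proof is correct and is exactly the straightforward argument the paper has in mind (the paper omits the proof, stating only that it is straightforward): unfold $\sqsubseteq$ via Definition~\ref{def:ordering}, then use commutativity, associativity and idempotence of $\combine$ for (i), and right-distributivity of $\extend$ over $\combine$ for (ii). Nothing is missing.
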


The proof of Lemma~\ref{lem:properties} is straightforward.
We shall now define an additional condition on the extend operator
that will be used later on in this section.

\begin{definition}[Extend Preserves Inequality]
Given an idempotent semiring we say that \emph{extend preserves
inequality} if $a \not= b$ implies that $a \extend c \not= b \extend c$
for any $a, b, c \in \domain \smallsetminus \{\combineNeutral\}$.
\end{definition}

\begin{example} \label{ex:semirings}
The tuple $\sint
= (\integerswithinfty,\min,+,\infty,0)$ is an idempotent
semiring.
The domain are the integers extended
with infinity $\integerswithinfty = \integers \cup \set{\infty}$
where
$\min(\infty, a) = \min(a, \infty) = a$ and
$a + \infty = \infty + a = \infty$ for all $a \in \integerswithinfty$.
Combine is the minimum and extend is the usual addition on integers.
It is easy to see that $\sint$ meets
the requirements of Definition \ref{def:semiring}. It moreover preserves
inequality because the addition does so, and $\sqsubseteq$ is a total order.

Another example of an idempotent semirings is
$\ring_{\mathit rat}=(\rationals[0,1],\max,*,0,1)$
which is the semiring defined over the rationals in the interval from
$0$ to $1$. Here combine is the maximum and extend is the multiplication
on rationals.
This semiring $\ring_{\mathit rat}$ also meets the requirements of
Definition \ref{def:semiring}, extend preserves inequality
 and $\sqsubseteq$ is a total order.
\qed
\end{example}


In what follows
we fix an idempotent semiring
$\ring=(\domain, \combine, \extend, \combineNeutral, \extendNeutral)$.
We often omit the $\extend$ sign in ``products'', i.e., we write $ab$ for $a \extend b$.
We also fix a set $\vars = \{X_1, \ldots, X_n\}$ of variables.
Now we define vectors of polynomials over~$\ring$ and their fixed points following~\cite{EKL07:dlt}.

Let $\vectors = \domain^n$ denote the set of {\em vectors} over $\ring$.
We use bold letters to denote vectors, e.g., $\vv = (\vv_1, \ldots, \vv_n)$.
We also write $\vX = (X_1, \ldots, X_n)$ to arrange the variables from~$\vars$ in a vector.
We extend $\sqsubseteq$ to vectors by setting $\vu \sqsubseteq \vv$ if $\vu_i \sqsubseteq \vv_i$ for all $1 \le i \le n$.

A {\em monomial} is a finite expression
$a_1 X_{i_1} a_2 X_{i_2} \cdots a_s X_{i_s} a_{s+1}$
where $s \geq 0$, $a_1,\ldots,a_{s+1}\in \domain$ and $X_{i_1},\ldots,X_{i_s} \in \vars$.
A {\em polynomial} is an expression of the form $m_1 \combine \cdots \combine m_s$
where $s \geq 0$ and $m_1, \ldots, m_s$ are monomials.
The value of a monomial $m = a_1 X_{i_1} a_2 \cdots a_s X_{i_s} a_{s+1}$ at $\vv$
 is \mbox{$m(\vv)= {a_1 \vv_{i_1} a_2 \cdots a_s \vv_{i_s} a_{s+1} \in \domain}$}.
The value of a polynomial $f = m_1 \combine \cdots \combine m_s$
 at~$\vv$ is $f(\vv) = m_1(\vv) \combine \cdots \combine m_s(\vv)$.
A polynomial induces a mapping from $\vectors$ to $\domain$ that assigns to $\vv$ the element $f(\vv)$.
A vector of polynomials $\vf=(\vf_1, \ldots, \vf_n)$
is an $n$-tuple of polynomials;
 it induces a mapping from $\vectors$ to $\vectors$
 that assigns to a vector $\vv$ the vector $\vf(\vv) = (\vf_1(\vv), \ldots, \vf_n(\vv))$.
A {\em fixed point of~$\vf$} is a vector $\vv$ that satisfies $\vv = \vf(\vv)$.
A greatest fixed point of~$\vf$ is a fixed point $\vv$ such that $\vv' \sqsubseteq \vv$ holds for all other fixed points~$\vv'$.
The size $K(\vf)$ of a vector of polynomials $\vf$ is the total number of $\combine$ and $\extend$ operators in~$\vf$.
In particular, given a vector $\vv$, it takes $O(K(\vf))$ time to compute~$\vf(\vv)$.

\begin{example} \label{ex:run-vector}
 Consider the semiring $\sint$ from Example~\ref{ex:semirings}.
 Let $\vars = \{X_1, X_2, X_3\}$.
 Then $\vf = ( -2 \combine X_2 \extend X_3, \ X_3 \extend 1, \ X_1 \combine X_2 )$
  is a vector of polynomials over~$\sint$.
 It can be rewritten as
  $\vf = ( \min\{-2, X_2 + X_3\}, \ X_3 + 1, \ \min\{X_1,X_2\} )$.
 The size $K(\vf)$ equals $4$.
\qed
\end{example}

It is easy to see that polynomials are monotone and continuous mappings w.r.t.~$\sqsubseteq$, see Lemma~\ref{lem:properties}.
Kleene's theorem can then be applied (see e.g.~\cite{Kui}), which leads to
the following proposition.


\begin{proposition}\label{prop:kleene}
 Let $\vf$ be a vector of polynomials.
 Let the {\em Kleene sequence} $(\ks{k})_{k \in \naturals}$ be defined by $\ks{0} = \vzero$ and $\ks{k+1} = \vf(\ks{k})$.
 \begin{itemize}
  \item[(a)]
   We have $\ks{k+1} \sqsubseteq \ks{k}$ for all $k\in\naturals$.
  \item[(b)]
   If a greatest fixed point exists then it is the infimum 
   of~\mbox{$\{\ks{k} \mid k \in \naturals\}$}.
  \item[(c)]
   If the infimum of~$\{\ks{k} \mid k \in \naturals\}$ exists then it is the greatest fixed point.
 \end{itemize}
\end{proposition}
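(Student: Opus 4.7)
The plan is to handle (a), (c), and (b) in that order, with (a)'s induction reused in (c) and (b), and with (c)'s continuity argument lying behind half of (b).

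For (a), I induct on~$k$. The base step uses that $\combineNeutral$ is the $\sqsubseteq$-maximum of $\domain$: because it is the combine-neutral, $a\combine\combineNeutral=a$ for every $a$, i.e., $a\sqsubseteq\combineNeutral$; componentwise this yields $\ks{1}=\vf(\vzero)\sqsubseteq\vzero=\ks{0}$. For the step, a short structural induction on~$\vf$, combining Lemma~\ref{lem:properties}(ii) with monotonicity of combine, shows that $\vf$ is monotone, so $\ks{k+1}\sqsubseteq\ks{k}$ gives $\ks{k+2}=\vf(\ks{k+1})\sqsubseteq\vf(\ks{k})=\ks{k+1}$.

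For (c), assume $\vu:=\inf\{\ks{k}\mid k\in\naturals\}$ exists. By continuity of $\vf$,
\[
\vf(\vu)=\inf\nolimits_{k}\vf(\ks{k})=\inf\nolimits_{k}\ks{k+1}=\vu,
\]
where the last equality uses that a decreasing sequence shares its infimum with its tail; so $\vu$ is a fixed point. For any fixed point $\vec w$, rerunning the induction of (a)---base $\vec w\sqsubseteq\combineNeutral=\ks{0}$, step $\vec w=\vf(\vec w)\sqsubseteq\vf(\ks{k})=\ks{k+1}$---shows that $\vec w$ is a lower bound of the Kleene sequence and hence $\vec w\sqsubseteq\vu$. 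Thus $\vu$ is the greatest fixed point.

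For (b), let $\vv^{*}$ denote the greatest fixed point. The induction just recalled shows that $\vv^{*}$ is a lower bound of the Kleene sequence. The remaining task is to verify that it is the \emph{greatest} such lower bound, i.e., that every $\vec w$ with $\vec w\sqsubseteq\ks{k}$ for all $k$ satisfies $\vec w\sqsubseteq\vv^{*}$. I expect the cleanest route is to leverage (c): use the hypothesized existence of $\vv^{*}$ together with monotonicity and continuity of $\vf$ to argue that the infimum $\vu$ of the Kleene sequence exists; then (c) identifies $\vu$ as the greatest fixed point, which is $\vv^{*}$, so any lower bound $\vec w$ satisfies $\vec w\sqsubseteq\vu=\vv^{*}$. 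The main obstacle lies in establishing the infimum's existence---a general idempotent semiring is not $\omega$-continuous, so infinite infima are not given for free. My planned workaround is to construct, from an arbitrary lower bound $\vec w$, a decreasing $\vf$-stable chain $\vu_{0}:=\vec w\combine\vv^{*}$, $\vu_{k+1}:=\vu_{k}\combine\vf(\vu_{k})$, all of whose members lie $\sqsubseteq\vv^{*}$ and remain lower bounds of the Kleene sequence; then apply continuity of $\vf$ to this bounded chain and use maximality of~$\vv^{*}$ to pin down its limit as $\vv^{*}$, which in turn identifies the Kleene sequence's infimum and closes the argument.
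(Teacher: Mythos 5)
Your parts (a) and (c) are correct and amount to exactly the standard Kleene argument that the paper itself invokes: the paper gives no written proof of this proposition, it merely notes that polynomials are monotone and continuous w.r.t.\ $\sqsubseteq$ and appeals to Kleene's theorem. Your base case ($\combineNeutral$ is the $\sqsubseteq$-maximum), the monotonicity induction, and the computation $\vf(\vu)=\inf_k\vf(\ks{k})=\inf_k\ks{k+1}=\vu$ together with the lower-bound induction for fixed points are all fine; like the paper, you take the continuity of polynomials on faith rather than deriving it from the semiring axioms.

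Part (b), however, is not a proof but an acknowledged plan, and the plan does not work. What must be shown is that every lower bound $\vec w$ of $\{\ks{k}\mid k\in\naturals\}$ satisfies $\vec w\sqsubseteq\vv^{*}$, i.e.\ $\vec w\combine\vv^{*}=\vec w$. Your auxiliary chain $\vu_0:=\vec w\combine\vv^{*}$, $\vu_{k+1}:=\vu_k\combine\vf(\vu_k)$ consists of elements that are $\sqsubseteq\vec w$ \emph{and} $\sqsubseteq\vv^{*}$ from the very first step, so no statement about its limit can ever separate $\vec w$ from $\vv^{*}$: the relation you need to establish is discarded the moment you replace $\vec w$ by $\vec w\combine\vv^{*}$. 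Moreover, ``maximality of $\vv^{*}$'' only says that fixed points lie $\sqsubseteq\vv^{*}$; it cannot force the limit of a chain starting at $\vu_0\sqsubseteq\vv^{*}$ (in general strictly below) back up to $\vv^{*}$. Finally, the existence of the infimum of your auxiliary chain is exactly as unjustified as that of the Kleene sequence itself --- in a general idempotent semiring only \emph{finite} infima (given by $\combine$) are guaranteed. The workable route to (b) in the semirings where the paper actually uses it (e.g.\ $\sint$) is conditional completeness: every set bounded below has an infimum, so once your induction shows that $\vv^{*}$ is a lower bound of the Kleene sequence, the infimum exists and part (c) identifies it with $\vv^{*}$. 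For an arbitrary idempotent semiring you would instead have to argue directly that the set of lower bounds of $\{\ks{k}\}$ has $\vv^{*}$ as its maximum, and your construction does not supply that argument.
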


Proposition~\ref{prop:kleene} is the mathematical basis for
the classical fixed-point iteration:
apply~$\vf$ until a fixed point is reached, which is, by Proposition~\ref{prop:kleene}~(c),
 the greatest fixed point of~$\vf$.
We call this method {\em Kleene's iteration}.
In general, Kleene's iteration does not always reach a fixed point.
Some equations, like
 \mbox{$X = X \extend (-1)$} over $\sint$,
do not have any (greatest)
fixed point, other equations might have a greatest
fixed point but it is not achievable in a finite number of
Kleene's iterations (consider for example the above equation
but over the semiring $\sint$ extended with the
element $-\infty$).
It is not a priori clear how to detect whether Kleene's iteration
terminates, i.e., computes the greatest fixed point in a finite
number of iterations.


Algorithm~\ref{alg:computation} (called ``safe Kleene's iteration'') solves this problem.
If Kleene's iteration reaches the greatest fixed point, then the algorithm computes it.
Otherwise it outputs a {\em witness component} where
Kleene's iteration does not terminate.
Formally, a witness component is defined as follows.
\begin{definition}[Witness Component]\label{def:witness}
 Let $\vf$ be a vector of polynomials over an idempotent semiring.
 A component $i$ \ ($1 \le i \le n$) is a {\em witness component} if $\{\ks{k}_i \mid k \ge 0\}$ is an infinite set.
\end{definition}
In our applications, the presence of a witness component pinpoints a problem of the analyzed model which the user may want to fix.
More details are given in Section~\ref{sec:apps}.

Algorithm~\ref{alg:computation} is based on the generalized Bellman-Ford algorithm of~\cite{GawlitzaSeidl:Precise-Fixpoint}
for $\sint$ and generalizes it further
to totally ordered semirings where extend preserves inequality
and to equations over arbitrary polynomials.

\begin{algorithm}[!ht]
\caption{Safe Kleene's iteration}\label{alg:computation}
\begin{algorithmic}[1]
\REQUIRE A vector of polynomials $\vf = (\vf_1, \ldots, \vf_n)$ over an idempotent semiring
$\ring=(\domain, \combine, \extend, \combineNeutral, \extendNeutral)$
s.t. $\sqsubseteq$ is a total order and
where extend preserves inequality.
\ENSURE Greatest fixed point of~$\vf$ or a witness component.  \\
\STATE {$\ks{0} := \vzero$}
\FOR {$k := 1$ {\bf to} $n+1$} \label{line:main-loop}
 \STATE $\ks{k} := \vf(\ks{k-1})$
\ENDFOR
\IF {$\exists i$ with $1 \le i \le n$ such that $\ks{n+1}_i \ne \ks{n}_i$} \label{line:check-if-equal}
  \STATE {{\bf return} ``Kleene's iteration does not terminate. Component $i$ is a witness.''} \label{line:witness}
\ELSE
  \STATE {{\bf return} ``The vector $\ks{n}$ is the greatest fixed point.''}
\ENDIF
\end{algorithmic}
\end{algorithm}

\begin{theorem} \label{thm:dg}
Algorithm~\ref{alg:computation} is correct and terminates in time $O(n \cdot K(\vf))$.
\end{theorem}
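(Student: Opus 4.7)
The complexity bound is immediate: the loop performs $n+1$ evaluations of $\vf$ at $O(K(\vf))$ semiring operations each, plus the final $O(n)$ componentwise check, so the overall cost is $O(n\cdot K(\vf))$. For correctness, the ``equal'' branch is easy: if $\ks{n+1}=\ks{n}$, then $\vf(\ks{n})=\ks{n}$, so $\ks{n}$ is a fixed point, and a straightforward induction shows $\ks{k}=\ks{n}$ for all $k\ge n$; hence the infimum of the Kleene sequence exists and equals $\ks{n}$, which Proposition~\ref{prop:kleene}(c) identifies as the greatest fixed point.

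Suppose instead $\ks{n+1}_i\ne\ks{n}_i$, so $\ks{n+1}_i\sqsubsetneq\ks{n}_i$ by totality and Proposition~\ref{prop:kleene}(a). The plan is a bootstrap: I claim that whenever $\ks{k+1}_i\sqsubsetneq\ks{k}_i$ holds for some $k\ge n$, there exists $r\in\{1,\ldots,n\}$ with $\ks{k+1+r}_i\sqsubsetneq\ks{k+1}_i$. Given this, monotonicity of the Kleene sequence (Proposition~\ref{prop:kleene}(a)) locates some $k'\in\{k+1,\ldots,k+r\}$ with $\ks{k'+1}_i\sqsubsetneq\ks{k'}_i$; iterating from $k_0:=n$ yields an infinite strictly increasing sequence of strict-descent indices for component $i$, so $\{\ks{k}_i : k\ge 0\}$ contains infinitely many distinct values and $i$ is a witness.

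To prove the bootstrap step, choose a derivation tree $T$ for $X_i$ of minimum size (number of nodes) among those satisfying $\mathrm{val}(T)=\ks{k+1}_i$. Since $\ks{k+1}_i\sqsubsetneq\ks{k}_i$, the depth of $T$ must be exactly $k+1$: a depth $\le k$ would force $\mathrm{val}(T)\sqsupseteq\ks{k}_i$, contradicting $\mathrm{val}(T)=\ks{k+1}_i$. A root-to-leaf path of length $k+1$ traverses $k+2\ge n+2$ variable-labelled nodes; its first $n+1$ nodes carry only $n$ distinct labels, so pigeonhole yields positions $0\le a<b\le n$ on this path with the same variable, producing a cycle of length $r:=b-a\in\{1,\ldots,n\}$. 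Associativity of extend factors $\mathrm{val}(T)=\mathrm{pre}\extend C_L\extend V\extend C_R\extend\mathrm{post}$, where $C_L,C_R$ absorb the semiring constants and side-branch values accumulated along the cycle, $V$ is the value of the subtree rooted at position $b$, and $\mathrm{pre},\mathrm{post}$ collect the remaining factors. A strict descent from $\combineNeutral=\ks{0}_i$ cannot terminate at $\combineNeutral$, so $\ks{k+1}_i\sqsubsetneq\combineNeutral$ and all five factors are non-$\combineNeutral$; thus ``extend preserves inequality'' applies.

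The main technical step is to establish $C_L\extend V\extend C_R\sqsubsetneq V$, which follows by ruling out the two other options offered by totality. The equality $C_L V C_R=V$ is excluded by minimality of $T$: splicing the cycle out of $T$ (replacing the subtree at position $a$ by the subtree at position $b$) gives a tree $T^-$ with strictly fewer nodes but the same value, contradicting minimality. The strict inequality $C_L V C_R\sqsupsetneq V$ is excluded by optimality of $T$: monotonicity (Lemma~\ref{lem:properties}(ii)) would give $\mathrm{val}(T^-)\sqsubsetneq\mathrm{val}(T)=\ks{k+1}_i$, yet $T^-$ is a derivation tree for $X_i$ of depth at most $k+1$ and so must satisfy $\mathrm{val}(T^-)\sqsupseteq\ks{k+1}_i$. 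Hence $C_L V C_R\sqsubsetneq V$; monotonicity then gives $C_L^2 V C_R^2\sqsubseteq C_L V C_R$, and extend-preserves-inequality rules out equality, so $C_L^2 V C_R^2\sqsubsetneq C_L V C_R$. Pumping the cycle once yields a derivation tree $T^{+2}$ for $X_i$ of depth $k+1+r$ whose value $\mathrm{pre}\extend C_L^2\extend V\extend C_R^2\extend\mathrm{post}$ satisfies $\ks{k+1+r}_i\sqsubseteq\mathrm{val}(T^{+2})\sqsubsetneq\ks{k+1}_i$, completing the bootstrap.
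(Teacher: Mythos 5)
Your proof is correct in substance and follows essentially the same route as the paper's: both arguments pass through derivation trees (your $\mathrm{val}$ is the paper's yield $\yi$, and the identity $\ks{k}_i=\combineBig\{\yi(t)\mid h(t)\le k,\ \lambda_v(t)=X_i\}$ that you use implicitly is the paper's Proposition~\ref{prop:der-tree-height}), apply the pigeonhole principle to a long root-to-leaf path, compare the yields at the two occurrences of a repeated variable using totality, and either splice the cycle out (deriving a contradiction) or pump it (producing a strictly smaller value at a later Kleene index, with strictness supplied by ``extend preserves inequality''). The only genuine difference is bookkeeping: you fix a minimum-size optimal tree and use minimality to exclude $C_L\extend V\extend C_R=V$, whereas the paper folds that case into the sandwich $\yi(t)=\ks{k}_i\sqsubseteq\ks{k'}_i\sqsubseteq\yi(t')\sqsubseteq\yi(t)$; you are also more explicit than the paper about excluding $\combineNeutral$ before invoking preservation of inequality, which is a point the paper glosses over.

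One small repair is needed. You minimize the number of nodes over \emph{all} trees with $\mathrm{val}(T)=\ks{k+1}_i$ and then assert that the depth of $T$ is exactly $k+1$, but your justification only rules out depth at most $k$; a minimum-size tree of the right value could a priori be taller than $k+1$ (a long thin tree may have fewer nodes than a bushier optimal one of height $k+1$). This matters because your exclusion of the case where $C_L\extend V\extend C_R$ lies strictly above $V$ relies on $T^-$ having height at most $k+1$, so that $\mathrm{val}(T^-)\sqsupseteq\ks{k+1}_i$. The fix is immediate: take $T$ of minimum size among trees of height at most $k+1$ whose value is $\ks{k+1}_i$ (such a tree exists because the relevant set of yields is finite, nonempty, and totally ordered); then the height is forced to be exactly $k+1$, $T^-$ stays within the height bound, and both exclusions go through as you wrote them.
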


Algorithm~\ref{alg:computation} on its own is very straightforward,
and its proof for polynomials of degree only 1 would directly mimic
the proof of Bellman-Ford algorithm. Our contribution is that we prove that
it works also for polynomials of higher degrees where more involved
technical treatment is necessary.
Full details can be found in Appendix~\ref{proof:thm}.

\begin{remark} \label{rem:all-witnesses}
 In the integer semiring $\sint$,
 Algorithm~\ref{alg:computation} can be extended such that
it computes {\em all} witness components and for the remaining terminating
components returns the exact value.
This is done as follows.
The main loop on lines 2--4 is run once again, but the components that still change are assigned
 a new semiring element ``$-\infty$'' on which the operators ``$+$'' and ``$\min$'' act as expected.
Thus, $-\infty$ may be propagated through the components during the repetition of the main loop.
At the end, all components that are not $-\infty$ have reached their final value,
 all others can be reported as witness components.
For details see~\cite{GawlitzaSeidl:Precise-Fixpoint}.
\end{remark}

\begin{example} \label{ex:run-kleene}
 Consider again the vector of polynomials from Example~\ref{ex:run-vector}:
 \[
  \vf = ( \min\{-2, X_2 + X_3\}, \ X_3 + 1, \ \min\{X_1,X_2\} ) \ .
 \]
 Kleene's iteration produces the following Kleene sequence:
$\ks{0} = (\infty, \infty, \infty)$,
$\ks{1} = (-2, \infty, \infty)$,
$\ks{2} = (-2, \infty, -2)$,
$\ks{3} = (-2, -1, -2)$,
$\ks{4} = (-3, -1, -2)$.
As $\ks{3}_1 = -2 \ne -3 = \ks{4}_1$, Alg.~\ref{alg:computation} returns the first component as a witness. 
\qed
\end{example}

Notice that Algorithm~\ref{alg:computation} merely indicates whether
a greatest fixed point can be found \emph{using Kleene's iteration} or not.
Even if Algorithm~\ref{alg:computation} outputs a witness component,
a greatest fixed point may still exist (and be found by other means).
An example is a semiring over the reals which can admit
the sequence $1/2^n$ for some variable. This sequence converges to~$0$,
but Kleene's iteration fails to detect this. Nevertheless, for some
semirings like $\sint$ used in our applications,
we can make the following stronger statement.
\begin{corollary} \label{cor:intring}
Algorithm~\ref{alg:computation} applied to polynomials over the semiring
$\sint$ finds the greatest fixed point iff it exists.
If it does not exist, all witness components can be explicitly marked.
\end{corollary}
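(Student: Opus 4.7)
The plan is to establish two things: (i) that for $\sint$ the existence of a greatest fixed point is equivalent to termination of Kleene's iteration (so Algorithm~\ref{alg:computation} succeeds iff the gfp exists), and (ii) that in the non-terminating case the extension sketched in Remark~\ref{rem:all-witnesses} really does produce all witness components. The discrete structure of $\integers$ and the absence of $-\infty$ from the domain $\integerswithinfty$ are what make the iff statement hold for $\sint$, whereas the corresponding statement would fail over, say, the reals.

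For part (i), one direction is immediate: if Algorithm~\ref{alg:computation} terminates successfully, then by Theorem~\ref{thm:dg} its output is the greatest fixed point, so a gfp exists. For the converse, suppose $\vv^*$ is the greatest fixed point of $\vf$. Recall that in $\sint$ the relation $a\sqsubseteq b$ unfolds to $\min(a,b)=a$, i.e.\ $a\le b$ in the standard integer order. By Proposition~\ref{prop:kleene}(a) the Kleene sequence is non-increasing, and by Proposition~\ref{prop:kleene}(b) we have $\vv^*\sqsubseteq \ks{k}$ for every $k$, i.e.\ $\vv^*_i\le \ks{k}_i$ componentwise. Fix a component $i$. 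Either $\vv^*_i=\infty$, in which case $\ks{k}_i=\infty$ for all $k$; or $\vv^*_i\in\integers$, in which case the non-increasing sequence $(\ks{k}_i)_{k\in\naturals}$ lives eventually in $\integers$ and is bounded below by the integer $\vv^*_i$, hence stabilizes after finitely many steps. Taking the maximum over components yields an index $k^*$ with $\ks{k}=\vv^*$ for all $k\ge k^*$. In particular $\{\ks{k}_i\mid k\ge 0\}$ is finite for every $i$, so no witness component exists; hence the test on line~\ref{line:check-if-equal} of Algorithm~\ref{alg:computation} fails, the algorithm returns~$\ks{n}$, and by Theorem~\ref{thm:dg} this equals $\vv^*$. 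This proves the iff.

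For part (ii), I follow the extension described in Remark~\ref{rem:all-witnesses}. Augment $\sint$ with a fresh element $-\infty$ so that $\min(a,-\infty)=-\infty$ and $a+(-\infty)=-\infty$ for every $a\ne\infty$ (and leave the combinations with $\infty$ as needed by Definition~\ref{def:semiring}). After the main loop has produced $\ks{n+1}$, set to $-\infty$ every component $i$ with $\ks{n+1}_i\ne\ks{n}_i$ (these are witnesses by Theorem~\ref{thm:dg}), and then run the loop of lines~\ref{line:main-loop}--4 once more, for another $n+1$ rounds. Because $-\infty$ is an annihilator for extend and dominates combine, it propagates in one step to every component whose defining polynomial syntactically depends on an already-marked component, and by the same argument as in the proof of Theorem~\ref{thm:dg} (Appendix~\ref{proof:thm}) such propagation saturates in at most $n$ further iterations. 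A component receives $-\infty$ during this phase exactly when its value was unbounded below along Kleene's sequence, i.e.\ exactly when it is a witness in the sense of Definition~\ref{def:witness}; the components that remain finite have stabilized to their correct gfp values on the terminating subsystem.

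The main obstacle is the bookkeeping for part (ii): one must check that introducing $-\infty$ does not spuriously infect a component that stabilizes in the original iteration, which rests on the monotonicity of polynomials together with the fact that a component taking $-\infty$ here corresponds to an unbounded descending sub-chain in the original Kleene sequence. This is exactly what~\cite{GawlitzaSeidl:Precise-Fixpoint} establishes for their Bellman-Ford variant; the argument carries over verbatim once polynomials of degree $\ge 2$ are handled by the same reasoning used for Theorem~\ref{thm:dg}.
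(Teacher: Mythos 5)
Your proposal is correct and follows essentially the same route as the paper: the paper's (very terse) proof rests on the observation that in $\sint$ a component is a witness iff Kleene's iteration fails to stabilize there --- which you justify in detail via discreteness of $\integers$ and the lower bound $\vv^*\sqsubseteq\ks{k}$ from Proposition~\ref{prop:kleene}(b) --- and then defers the marking of all witnesses to Remark~\ref{rem:all-witnesses} and~\cite{GawlitzaSeidl:Precise-Fixpoint}, exactly as you do. Your version merely makes explicit what the paper leaves implicit (including the caveat that $-\infty$ propagation must be semantic rather than syntactic, since $\infty$ annihilates $-\infty$), so no further comparison is needed.
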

\begin{proof}
In $\sint$ a component is a
witness component iff Kleene's iteration does not terminate in that
component. The rest follows
from Definition~\ref{def:witness}, Proposition~\ref{prop:kleene}
and Remark~\ref{rem:all-witnesses}. \qed
\end{proof}




\section{Weighted Pushdown Systems}
\label{sec:wpds}

In this section we will use the fixed-point equations studied
in the previous section for reasoning about
properties of weighted pushdown systems (WPDS)~\cite{RSJM05}.
We are interested in applying Theorem~\ref{thm:dg} to weighted
pushdown systems; therefore we implicitly consider only semirings
that are totally ordered, and where extend preserves inequality.

\begin{definition}[Weighted Pushdown System]
A weighted pushdown system
is a 4-tuple $\wpds=(P, \Gamma, \Delta, \ring)$, where $P$ is a finite
set of \emph{control states}, $\Gamma$ is a finite \emph{stack alphabet},
$\Delta \subseteq (P \times \Gamma) \times \domain \times
(P \times \Gamma^{*})$ is a finite set of \emph{rules}, and
$\ring=(\domain, \combine, \extend, \combineNeutral, \extendNeutral)$
is an idempotent semiring.
\end{definition}

We write $pX \definesto{\weight} q \alpha$
whenever $r=(p, X, \weight, q, \alpha) \in \Delta$ and call
$d$ the \emph{weight} of~$r$, denoted by~$d_r$.
We consider only rules
where $|\alpha| \leq 2$.
(It is well-known that every WPDS can be translated into a
one that obeys this restriction and is larger by only a
constant factor, see, e.g., \cite{Sch02b}. The reduction
preserves reachability.)
We let the symbols $X, Y, Z$ range over $\stackalph$
and $\alpha,\beta,\gamma$ range over $\stackalph^*$.

\begin{example} \label{ex:wpds}
As a running example in this section,
we consider a weighted pushdown system over the semiring
with both positive and negative integers as weights, i.e.\
$\wpds_{ex}=(\{p,q\},\{X,Y\},\Delta_{ex},\sint)$, where
$\Delta_{ex} = \{pX\definesto{1}qY,\ \ pX\definesto{1}pXY,\ \ pY\definesto{1}p,\ \ qY\definesto{-2}q\}$.
\qed
\end{example}

A \emph{configuration} of a weighted pushdown system
$\wpds$ is a pair $p\gamma$ where
$p \in P$ and $\gamma \in \Gamma^*$. A transition relation
$\pto{}$ on configurations is defined by
$pX\gamma \pto{r} q\alpha\gamma$ iff $\gamma\in\Gamma^*$ and
there exists $r\in\Delta$, where
$r = (pX \definesto{d} q\alpha)$.
We annotate $\pto{}$ with the rule $r \in \Delta$
which was used to derive the conclusion.
If there exists a sequence of configurations $c_0,\ldots,c_n$
and rules $r_1,\ldots,r_n$ such that $c_{i-1}\pto{r_i}c_i$ for
all $i=1,\ldots,n$, then we write $c_0\pto{\sigma}c_n$, where
$\sigma:=r_1\ldots r_n$. The weight of $\sigma$ is defined as
$v(\sigma)=d_{r_1}\extend\cdots\extend d_{r_n}$.
By definition $v(\epsilon) = \extendNeutral$.

Let $c,c'$ be two configurations and $\sigma\in\Delta^*$ such that
$c\pto{\sigma}c'$. We call $c$ a \emph{predecessor} of $c'$ and
$c'$ a \emph{successor} of $c$. In the following, we will consider the
problem of computing the set of all predecessors $\mathit{pre}^*(c_f)$
and successors $\mathit{post}^*(c_f)$
for a given configuration $c_f$. Due to space limitations we provide
the full treatment only for the predecessors; the computation of
successors is analogous and it is provided in Appendix~\ref{app:post}.

Let us fix a WPDS~$\wpds$ and a \emph{target configuration}
$c_f$, where $c_f=p_f\epsilon$ for some control state~$p_f$.
For any configuration $c$ of $\wpds$,
we want to know the minimal weight of a path from~$c$ to $c_f$. If
a path of minimal weight does not exist for every~$c$, we want to
detect such a case.
In our applications (see Section~\ref{sec:apps}),
this situation usually indicates the existence of an error.
\begin{remark}
In the literature, it is more common to consider
a \emph{regular} set $C$ of target configurations.
This problem, however,
reduces to the one with only a single target configuration~$c_f$.
The reduction can be achieved by extending~$\wpds$ with additional `pop'
rules that simulate a finite automaton for $C$; the `pop' rules will
succeed in reducing the stack to~$c_f$ iff they begin with a configuration
in~$C$. For details, see~\cite{RSJM05}, Section 3.1.1.
\end{remark}

At an abstract level, we are interested in solutions for the following
equation system, in which each configuration~$c$ is represented
by a variable~$\sema{c}$. Intuitively, the greatest solution (if it exists) for
the variable $\sema{c}$ will correspond to the minimum (w.r.t. the combine operator)
of accumulated weights over
all paths leading from the configuration $c$ to $c_f$.

\begin{equation}\sema{c}=I(c)\combine\bigoplus_{c\pto{r}c'}(d_r\extend \sema{c'}),
\qquad \hbox{where\ } I(c):=\begin{cases}\extendNeutral& \hbox{if $c=c_f$} \\
\combineNeutral & \hbox{otherwise }\end{cases}\label{e:fofc}\end{equation}
Let us consider the Kleene sequence $(\ks{k})_{k \in \naturals}$ for
(\ref{e:fofc}). By $\ks{k}_{\sema{c}}$ we denote the entry for configuration~$c$
in the $k$-th iteration of the Kleene sequence.

\begin{lemma}
\label{lem:fofc}
For $k\ge1$ and any configuration~$c$, the following holds
$$\ks{k}_{\sema{c}}=\bigoplus\{\,v(\sigma)\mid c\pto{\sigma}c_f, \ |\sigma|<k\,\}\ .$$
\end{lemma}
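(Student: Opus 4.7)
The plan is to prove the lemma by induction on $k \ge 1$, using the definition of the Kleene sequence together with the distributivity and annihilator laws of an idempotent semiring.

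For the base case $k=1$, I would unfold the definition: $\ks{1}_{\sema{c}} = \vf_{\sema{c}}(\vzero) = I(c) \combine \bigoplus_{c\pto{r}c'} (d_r \extend \combineNeutral)$. Since $\combineNeutral$ is an annihilator for $\extend$, each summand in the big combine collapses to $\combineNeutral$, so $\ks{1}_{\sema{c}} = I(c)$. On the right-hand side, the only $\sigma$ with $|\sigma|<1$ is the empty sequence $\epsilon$, which is a derivation $c\pto{}c_f$ exactly when $c=c_f$, and $v(\epsilon)=\extendNeutral$ by definition. Thus the RHS also equals $\extendNeutral$ if $c=c_f$ and $\combineNeutral$ (empty combine) otherwise, matching $I(c)$.

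For the inductive step, assume the claim holds for $k$ and consider $\ks{k+1}_{\sema{c}} = I(c) \combine \bigoplus_{c\pto{r}c'} (d_r \extend \ks{k}_{\sema{c'}})$. I would substitute the induction hypothesis $\ks{k}_{\sema{c'}} = \bigoplus \{v(\sigma')\mid c'\pto{\sigma'}c_f,\ |\sigma'|<k\}$ into each summand, and then pull $d_r$ inside using left-distributivity of $\extend$ over $\combine$, turning each summand into $\bigoplus \{d_r \extend v(\sigma') \mid c'\pto{\sigma'}c_f,\ |\sigma'|<k\}$. Since $d_r \extend v(\sigma') = v(r\sigma')$ and every nonempty derivation $c\pto{\sigma}c_f$ factors uniquely as $c\pto{r}c'\pto{\sigma'}c_f$ with $|\sigma'|=|\sigma|-1$, flattening the double combine yields $\bigoplus\{v(\sigma)\mid c\pto{\sigma}c_f,\ 1\le|\sigma|<k+1\}$. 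Adding the $I(c)$ summand, which contributes exactly the empty-derivation term (if $c=c_f$), gives the desired $\bigoplus\{v(\sigma)\mid c\pto{\sigma}c_f,\ |\sigma|<k+1\}$.

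The only mildly delicate step is the bookkeeping around the reindexing of the double combine: justifying that ranging over all $(r,c')$ with $c\pto{r}c'$ and then over all $\sigma'$ from $c'$ is in bijection with ranging over nonempty derivations $\sigma$ from $c$. This is where commutativity and associativity of $\combine$, together with idempotence (so that duplicate representations of the same weight cause no harm), are implicitly used. No subtlety involving infinite descending chains arises here, because the lemma only asserts an equality between two finite expressions for each fixed $k$.
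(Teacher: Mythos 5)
Your proof is correct, but it takes a different route from the paper's. The paper dispatches Lemma~\ref{lem:fofc} in two lines by appealing to Proposition~\ref{prop:der-tree-height}: since every monomial in~(\ref{e:fofc}) contains at most one variable, a derivation tree is a path, and a tree of height~$k$ rooted at $\sema{c}$ corresponds exactly to a rule sequence of length $k-1$ from $c$ to~$c_f$ with matching yield/weight. You instead give a direct induction on~$k$, unfolding the Kleene iteration and using distributivity, the annihilator law, and the unique factorization of a nonempty path as a first rule followed by a shorter path. Both arguments are sound; yours is more elementary and self-contained (it avoids the derivation-tree machinery entirely and makes explicit exactly which semiring axioms are used, including the annihilator step that kills the $d_r\extend\combineNeutral$ summands in the base case), while the paper's approach buys uniformity: the same Proposition~\ref{prop:der-tree-height} immediately yields the companion Lemma~\ref{lem:goft} for the quadratic system~(\ref{e:goft}), where your inductive bookkeeping would become noticeably more delicate because a path can split into two subpaths of unequal lengths. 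One minor point you could make explicit: although the system~(\ref{e:fofc}) has infinitely many variables, each right-hand side mentions only finitely many of them (as $\Delta$ is finite), so the componentwise induction and the finiteness of each combine are unproblematic.
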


Thus, $\sema{c}$ is a witness component of~(\ref{e:fofc}) iff no path of minimal
weight exists, because it is possible to construct longer and longer paths
with smaller and smaller weights.
On the other hand, if (\ref{e:fofc}) has a greatest fixed point, then
the fixed point at~$\sema{c}$ gives the combine of the weights of all
sequences leading from $c$ to~$c_f$, commonly known as the
\emph{meet-over-all-paths}.
However, (\ref{e:fofc}) defines an infinite system of equations,
which we cannot handle directly. In the following, we shall derive a
\emph{finite} system of equations, from which we can determine the
greatest fixed point of (\ref{e:fofc}) or the existence of a witness component.

\begin{definition}[Pop Sequence]
Let $p,q$ be control states and $X$ be a stack symbol. A \emph{pop sequence}
for $p,X,q$ is any sequence $\sigma\in\Delta^*$ such that
$pX\pto{\sigma}q\epsilon$.
\end{definition}

Let us consider the following polynomial equation system, in which the variables
are triples $\sema{pXq}$, where $p,q$ are control states and $X$ a
stack symbol:
\begin{equation}
\sema{pXq} =\hspace{-5mm} \bigoplus_{(pX\definesto{d}q\epsilon) \in \Delta}
\hspace{-5mm} d \ \ \
\combine \hspace{-2mm} \bigoplus_{(pX\definesto{d}rY) \in \Delta}
\hspace{-4mm}
\big(d\extend \sema{rYq}\big) \
\combine
\hspace{-5mm}
\bigoplus_{(pX\definesto{d}rYZ) \in \Delta}
\hspace{-5mm}
\bigg(d\extend \bigoplus_{s\in P} \big(\sema{rYs}\extend
\sema{sZq}\big)\bigg) \ .
\label{e:goft}
\end{equation}
Intuitively, Equation~(\ref{e:goft}) lists all the possible ways in which
a pop sequence for $p,X,q$ can be generated and computes the values accumulated
along each of them.

\begin{example} \label{ex:wpdseqns}
Let us consider the WPDS $\wpds_{ex}$ from Example~\ref{ex:wpds}.
Here, the scheme presented in~(\ref{e:goft}) yields a system with
eight variables and equations, four of which are reproduced below.
$$\begin{array}{rcl@{\qquad}rcl}
\sema{pXp} & = & \min\{1+\sema{qYp},\ 1+\sema{pXp}+\sema{pYp},
            \ 1+\sema{pXq}+\sema{qYp}\} &
\sema{pYp} & = & 1 \\
\sema{pXq} & = & \min\{1+\sema{qYq},\ 1+\sema{pXp}+\sema{pYq},
            \ 1+\sema{pXq}+\sema{qYq}\} &
\sema{qYq} & = & -2
\end{array}$$
Notice that the other four variables would be simply assigned to the
$\combineNeutral$ element, in this case $\infty$.
\qed
\end{example}

We now examine the Kleene sequence $(\ks{k})_{k \in \naturals}$
for~(\ref{e:goft}).
\begin{lemma}
\label{lem:goft}
For any $k\ge1$, control states $p,q$, and stack symbol~$X$,
$$\bigoplus\{\,v(\sigma)\mid c\pto{\sigma}c_f, \ |\sigma|\le 2^{k-1}\,\}
\sqsubseteq
\ks{k}_{\sema{pXq}}
\sqsubseteq
\bigoplus\{\,v(\sigma)\mid c\pto{\sigma}c_f, \ |\sigma|\leq k-1\,\}\ .$$
\end{lemma}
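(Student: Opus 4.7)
The plan is to proceed by induction on $k$, exploiting the recursive shape of~(\ref{e:goft}) together with the stack discipline of pop sequences. The base case $k=1$ is essentially immediate: since $\ks{0}=\vzero$ has every component equal to $\combineNeutral$, and $\combineNeutral$ annihilates $\extend$, both the linear and cubic sums of~(\ref{e:goft}) collapse to $\combineNeutral$, leaving $\ks{1}_{\sema{pXq}}=\bigoplus_{(pX\definesto{d}q\epsilon)\in\Delta}d$, which is exactly the combine over pop sequences of length~$1$. The upper-bound combine for $k=1$ is empty (hence equals $\combineNeutral$), and the lower-bound combine coincides with the first sum, so both bounds hold.

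For the inductive step I would first prove the upper bound $\ks{k+1}_{\sema{pXq}}\sqsubseteq\bigoplus\{v(\sigma)\mid|\sigma|\le k\}$ in Bellman--Ford style. For every pop sequence $\sigma$ of length $\le k$ I would exhibit a term in the expansion of $\vf_{\sema{pXq}}(\ks{k})$ that is $\sqsubseteq v(\sigma)$, casing on the first rule $r_1$ of $\sigma$: if $r_1=(pX\definesto{d}q\epsilon)$ the matching term is just $d$; if $r_1=(pX\definesto{d}rY)$ then $\sigma=r_1\sigma'$ with $\sigma'$ a pop sequence for $r,Y,q$ of length $\le k-1$, so the IH together with Lemma~\ref{lem:properties}(ii) gives $d\extend\ks{k}_{\sema{rYq}}\sqsubseteq v(\sigma)$; and if $r_1=(pX\definesto{d}rYZ)$, the stack discipline forces a unique decomposition $\sigma=r_1\sigma_1\sigma_2$ through a pivot state $s$ (the control state at the first moment the stack shrinks to $Z$) with $|\sigma_1|+|\sigma_2|\le k-1$, so applying IH to both factors and the two-sided monotonicity of $\extend$ yields $d\extend(\ks{k}_{\sema{rYs}}\extend\ks{k}_{\sema{sZq}})\sqsubseteq v(\sigma)$. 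Since $\ks{k+1}\sqsubseteq v(\sigma)$ for every $\sigma$ in the indexed set and $\bigoplus$ is the greatest lower bound, the upper bound follows.

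For the lower bound I would substitute the IH estimate $\bigoplus\{v(\tau)\mid|\tau|\le 2^{k-1}\}\sqsubseteq\ks{k}_{\sema{\cdot}}$ into each occurrence of $\ks{k}$ in the expansion of $\ks{k+1}$ and then use distributivity of $\extend$ over $\combine$ to rewrite the result as a combine of weights of concatenated pop sequences for $pXq$: the constant sum contributes length $1$, the linear sum contributes lengths $\le 1+2^{k-1}$, and the cubic sum contributes lengths $\le 1+2\cdot 2^{k-1}$. Every pop sequence of length $\le 2^{k}$ thereby appears among these concatenations, so the combine over length-$\le 2^{k}$ pop sequences is $\sqsubseteq\ks{k+1}_{\sema{pXq}}$, as required.

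The hard part is the lower-bound cubic-term analysis: it is the product $\ks{k}_{\sema{rYs}}\extend\ks{k}_{\sema{sZq}}$ that produces the doubling behind the exponent $2^{k-1}$, and one must apply distributivity of $\extend$ over $\combine$ on both sides to keep the bookkeeping symbolic rather than element-by-element. A subsidiary technical point is the decomposition used in the upper bound, which relies on the standard fact that every pushdown run from $rYZ$ to $q\epsilon$ splits uniquely at the first moment the stack shrinks to $Z$.
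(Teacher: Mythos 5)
Your upper-bound argument is sound and is essentially an unrolled version of what the paper does via derivation trees, but the lower-bound half contains a genuine directional error. After substituting the inductive estimates and distributing, you obtain $\ks{k+1}_{\sema{pXq}} \sqsupseteq \bigoplus_{\sigma\in C} v(\sigma)$ where $C$ is your set of concatenations. To chain this into $\bigoplus\{v(\sigma) \mid |\sigma|\le 2^k\} \sqsubseteq \ks{k+1}_{\sema{pXq}}$ you need $\bigoplus_{C} v \sqsupseteq \bigoplus\{v(\sigma)\mid |\sigma|\le 2^k\}$, and since a combine over \emph{fewer} elements is $\sqsupseteq$ a combine over more, this requires $C \subseteq \{\sigma \mid |\sigma|\le 2^k\}$. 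You instead argue the reverse inclusion ("every pop sequence of length $\le 2^k$ appears among these concatenations"), which --- besides being false, since a linear-rule concatenation $r_1\tau$ with $2^{k-1} < |\tau| \le 2^k-1$ is a pop sequence of length $\le 2^k$ that is not in $C$ --- would only give $\bigoplus_C v \sqsubseteq \bigoplus\{|\sigma|\le 2^k\}$, which chains the wrong way and yields nothing. By your own bookkeeping $C$ contains sequences of length up to $1+2\cdot 2^{k-1} = 2^k+1 > 2^k$, so the needed inclusion fails and the induction does not close for the exponent $2^{k-1}$.

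The deeper issue is that the stated lower bound is not provable because it is false; the correct inductive invariant is $N_{k+1}=2N_k+1$ with $N_1=1$, i.e.\ $N_k = 2^k-1$ (the maximal number of nodes of a derivation tree of height $k$), not $2^{k-1}$. Concretely, take $\Delta=\{pX\definesto{-1}pXX,\ pX\definesto{0}p\epsilon\}$ over $\sint$, so that $\sema{pXp}=\min\{0,\,-1+\sema{pXp}+\sema{pXp}\}$. Then $\ks{2}_{\sema{pXp}}=-1$, achieved by the pop sequence of length $3$, while the only pop sequence of length $\le 2^{2-1}=2$ has weight $0$, and $0\not\sqsubseteq -1$. (The paper's own appendix proof makes the same miscount, asserting that a height-$k$ tree with binary branching corresponds to a path of length at most $2^{k-1}$ rather than $2^k-1$; the slip is harmless downstream, since the paper only needs \emph{some} finite bound in terms of $k$ to characterize witness components, but a proof of the literal statement cannot exist.) Your induction, with the final inference repaired as above, establishes exactly the corrected bound $\bigoplus\{v(\sigma)\mid pX\pto{\sigma}q\epsilon,\ |\sigma|\le 2^k-1\}\sqsubseteq\ks{k}_{\sema{pXq}}$, which is what the lemma should say.
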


Thus, $\sema{pXq}$ is a witness component of~(\ref{e:goft}) iff
no minimal-weight pop sequence exists for $p,X,q$. On the other hand,
if no witness component exists, then the value of $\sema{pXq}$ in the greatest
fixed point denotes the combine of the weights of all pop sequences for
$p,X,q$.

We now show how (\ref{e:goft}) can be used to derive statements
about~(\ref{e:fofc}).
Let a configuration $c=pX_1\ldots X_n$ be a predecessor
of $c_f$. Then any sequence $\sigma$ leading from $c$ to $c_f$ can be
subdivided into subsequences $\sigma_1,\ldots,\sigma_n$
and there exist states
$p=:p_0,p_1,\ldots,p_{n-1},p_n:=p_f$ such that $\sigma_i$ is
a pop sequence for $p_{i-1},X_i,p_i$, for all $i=1,\ldots,n$.
As a consequence, we can obtain a solution for~(\ref{e:fofc})
from a solution for~(\ref{e:goft}): suppose that $\vlambda$ is
the greatest fixed point of~(\ref{e:goft}), and let $\vmu$
be a vector of configurations as follows:
\begin{equation}
\vmu_{\sema{c}} = \bigoplus_{p_1,\ldots,p_{n-1}}
\big( \vlambda_{\sema{pX_1p_1}}\extend\cdots\extend\vlambda_{\sema{p_{n-1}X_np_f}} \big), \qquad \hbox{for $c=pX_1\ldots X_n$} \ .
\label{e:vmu}
\end{equation}
It is easy to see that (\ref{e:vmu}) ``sums up'' all possible paths
from $c$ to $c_f$, and therefore yields the meet-over-all-paths for~$c$.
Thus, $\vmu$ is a solution (greatest fixed point)
of~(\ref{e:fofc}). On the other hand,
if (\ref{e:fofc}) has a witness component, then (\ref{e:goft}) must also
have one.

\begin{theorem} \label{thm:fmh}
Applying Algorithm~\ref{alg:computation} to (\ref{e:goft}) either
yields a witness component or, via (\ref{e:vmu}), the greatest fixed point
of (\ref{e:fofc}).
\end{theorem}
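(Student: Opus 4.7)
The plan is to first apply Theorem~\ref{thm:dg} to the finite polynomial system~(\ref{e:goft}). Algorithm~\ref{alg:computation} either returns a witness component of~(\ref{e:goft}), which already establishes the first disjunct of the theorem, or it returns the greatest fixed point $\vlambda$ of~(\ref{e:goft}); the substance of the proof lies in this second case, where I have to show that the vector $\vmu$ defined from $\vlambda$ via~(\ref{e:vmu}) is the greatest fixed point of the infinite system~(\ref{e:fofc}).

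My next step would be to turn $\vlambda$ into a semantic statement about pop sequences. Because Algorithm~\ref{alg:computation} terminates, the Kleene sequence of~(\ref{e:goft}) eventually becomes constant at~$\vlambda$, and then the two sides of the sandwich in Lemma~\ref{lem:goft} squeeze against that constant value. Using Lemma~\ref{lem:properties}(i) I would show that the monotonically decreasing sequence $\bigoplus\{v(\sigma)\mid pX\pto{\sigma}q\epsilon,\ |\sigma|\le m\}$ stabilises at some finite length~$N_0$ to the value $\vlambda_{\sema{pXq}}$, and that $\vlambda_{\sema{pXq}}\sqsubseteq v(\sigma)$ holds for every pop sequence~$\sigma$. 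The key point is that $\vlambda_{\sema{pXq}}$ is then a \emph{finite} combine of pop-sequence weights, so all subsequent algebraic manipulations reduce to the finite case.

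I would then exploit the canonical decomposition of any execution $c\pto{\sigma}c_f$ (with $c=pX_1\cdots X_n$) into pop sequences $\sigma_1,\ldots,\sigma_n$ for triples $(p_{i-1},X_i,p_i)$ with $p_0=p$ and $p_n=p_f$; conversely, every such concatenation is a legal path. Combining this bijection with Lemma~\ref{lem:properties}(ii) yields $v(\sigma)\sqsupseteq\vlambda_{\sema{pX_1p_1}}\extend\cdots\extend\vlambda_{\sema{p_{n-1}X_np_f}}\sqsupseteq\vmu_{\sema{c}}$ for every execution~$\sigma$, while distributing $\extend$ over the finite combines of the previous step rewrites $\vmu_{\sema{c}}$ itself as a combine of weights of paths of total length at most $nN_0$. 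Hence the partial combine $a_k:=\bigoplus\{v(\sigma)\mid c\pto{\sigma}c_f,\ |\sigma|\le k\}$ equals $\vmu_{\sema{c}}$ for every $k\ge nN_0$. By Lemma~\ref{lem:fofc} this is precisely the Kleene sequence of~(\ref{e:fofc}) at coordinate~$c$; since every configuration has only finitely many one-step successors, the stabilisation makes $\vmu$ a fixed point of~(\ref{e:fofc}), while componentwise monotonicity of the operator in~(\ref{e:fofc}) forces every other fixed point to lie $\sqsubseteq$-below $\vmu$, which finishes the proof.

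The main technical obstacle is the careful handling of infinite combines: Definition~\ref{def:semiring} only provides distributivity for finite ones, whereas pop sequences, executions and configurations all arise in infinite supply. The plan defuses this by first pinning $\vlambda_{\sema{pXq}}$ down as a finite combine of weights of pop sequences of length at most~$N_0$, after which every subsequent manipulation (distributivity, commuting combines past extend, comparing with~$\vmu$) reduces to the finite case and the semiring axioms apply directly.
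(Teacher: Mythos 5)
Your proposal is correct and follows essentially the same route as the paper: decompose each execution $c\pto{\sigma}c_f$ into pop sequences, use Lemma~\ref{lem:goft} to identify $\vlambda_{\sema{pXq}}$ with the combine over all pop-sequence weights, and then use Lemma~\ref{lem:fofc} together with Proposition~\ref{prop:kleene} to conclude that $\vmu$ from~(\ref{e:vmu}) is the greatest fixed point of~(\ref{e:fofc}). Your treatment is in fact more careful than the paper's (which dismisses the key step with ``it is easy to see''), in particular in pinning down $\vlambda_{\sema{pXq}}$ as a \emph{finite} combine of weights of pop sequences of bounded length so that distributivity applies legitimately.
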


\begin{example} \label{ex:wpdssolv}
Once again, consider $\wpds_{ex}$ from Example~\ref{ex:wpds}
and the equation system from Example~\ref{ex:wpdseqns}.
Here, the Kleene sequence quickly converges to the values
$1$ for $\sema{pYp}$, \ $-2$ for $\sema{qYq}$, and $\infty$
for all other variables except $\sema{pXq}$, which turns out
to be a witness component of (\ref{e:goft}). Indeed, one can construct a series
of pop sequences for $p,X,q$ with smaller and smaller weights, e.g.
$pX\pto{1}qY\pto{-2}q\epsilon,$ and
$pX\pto{1}pXY\pto{1}qYY\pto{-2}qY\pto{-2}q\epsilon,$  and etc. 
with weights $-1$, $-2$ etc. If $c_f=q\epsilon$, this implies
that, e.g., $pX$ is a witness component
of~(\ref{e:fofc}). On the other hand, $qY$ or $qYY$ would not be a witness
components, because their values in (\ref{e:vmu}),
would not be affected by the variable $\sema{pXq}$ and evaluate
to $-2$ and $-4$, respectively.
\qed
\end{example}

\begin{remark}\label{rem:complexity}
The size of the equation system (\ref{e:goft}) is polynomial in $\wpds$.
Notice that it makes sense to generate equations only for such triples
$p,X,q$ in which $pX$ occurs on the left-hand side or right-hand side
of some rule. Under this assumption, the number of equations in
(\ref{e:goft}) is $\mathcal{O}(|P|\cdot|\Delta|)$, and its overall size
is $\mathcal{O}(|P|^2\cdot|\Delta|)$, the same complexity as in
the algorithms for computing predecessors in~\cite{EsparzaHRS00}.
According to Theorem~\ref{thm:dg}, Algorithm~\ref{alg:computation}
therefore runs in
$\mathcal{O}(|P|^3\cdot|\Delta|^2)$ time on~(\ref{e:goft}).
For any configuration~$c$ of interest, the value $\vmu_c$ in~(\ref{e:vmu})
can be easily
obtained 
from the result of Algorithm~\ref{alg:computation}. See also the $\wpds$-automaton technique in the subsection to follow.
A similar conclusion about the complexity of the algorithm for
computing successors can be drawn thanks to the (linear) connection
between forward and backward reachability analysis described in
Appendix~\ref{app:post}.
\end{remark}

\subsection{Weighted Automata}
\label{sub:wa}

For (unweighted) pushdown systems, it is well-known that reachability
preserves regularity; in other words, given a regular set of configurations,
the set of all predecessors resp.\ successors is regular. Moreover, given
a finite automaton recognizing a set of configurations, automata recognizing
the predecessors or successors can be constructed in polynomial time
(see, e.g., \cite{EsparzaHRS00}).

It is also known that the results carry over to weighted pushdown systems
provided that the semiring is \emph{bounded}, i.e., there are no infinite
descending chains w.r.t. $\sqsubseteq$~\cite{RSJM05}.
For this purpose, so-called
\emph{weighted automata} are employed.

\begin{definition}[Weighted $\wpds$-Automaton]
Let $\wpds=(P,\Gamma,\Delta,\ring)$ be a pushdown system over
a bounded semiring $\ring$.
A \emph{$\wpds$-automaton} is a 5-tuple
$\A=(Q, \stackalph, \to, \places, F)$
where $Q$ is a finite set of \emph{states},
$\mathord{\to} \subseteq Q \times \Gamma \times\domain \times Q$
is a finite set of \emph{transitions}, 
$\places \subseteq Q$, i.e.\ the control states of $\wpds$,
are the set of \emph{initial states}
and $F \subseteq Q$ is a set of \emph{final (accepting) states}.

Let $\pi=t_1\ldots t_n$ be a path in $\A$,
where $t_i=(q_i,X_i,d_i,q_{i+1})$ for all $1\le i\le n$.
The weight of~$\pi$ is defined as $v(\pi):=d_1\extend\cdots\extend d_n$.
If $q_1\in P$ and $q_{n+1}\in F$, then we say that $\pi$ \emph{accepts}
the configuration $q_1X_1\ldots X_n$. Moreover, if $c$ is a configuration,
we define $v_\A(c)$ as the combine of all $v(\pi)$ such that
$\pi$ accepts~$c$.
In this case, we also say that $\A$ \emph{accepts $c$ with weight
$v_\A(c)$}.
\end{definition}

In \cite{RSJM05} the following problem is considered for the case of
bounded semirings: compute a $\wpds$-automaton $\A$ such that $v_A(c)$
equals the meet-over-all-paths (or equivalently the greatest
fixed point of~(\ref{e:fofc}), which always exists for bounded semirings)
from $c$ to $c_f$, for every configuration~$c$.

We extend this solution to the case of unbounded semirings, using
Theorem~\ref{thm:fmh}. We first apply Algorithm~\ref{alg:computation}
to the equation system~(\ref{e:goft}). If the algorithm yields the
greatest fixed point, then we construct a $\wpds$-automaton
$\A=(P,\stackalph,\mathord{\to},P,\{c_f\})$, with $(p,X,d,q)\in\mathord{\to}$
for all $p,X,q$ such that $d$ is the value of $\sema{pXq}$ in the
greatest fixed point computed by Algorithm~\ref{alg:computation}.
Given a configuration~$c$, it is easy to see
that $v_\A(c)$ yields the same result as in~(\ref{e:vmu}).

\begin{example} \label{ex:wpdsaut}
The automaton arising from Example~\ref{ex:wpdssolv} is depicted
below 
where the witness component is marked by $\undefd$ and transitions
with the value $\infty$ are omitted completely.
\begin{center}
\scalebox{0.3}{\begin{picture}(0,0)%
\includegraphics{automaton.pstex}%
\end{picture}%
\setlength{\unitlength}{4144sp}%
\begingroup\makeatletter\ifx\SetFigFontNFSS\undefined%
\gdef\SetFigFontNFSS#1#2#3#4#5{%
  \reset@font\fontsize{#1}{#2pt}%
  \fontfamily{#3}\fontseries{#4}\fontshape{#5}%
  \selectfont}%
\fi\endgroup%
\begin{picture}(3409,1519)(3044,-1067)
\put(3511,-736){\makebox(0,0)[b]{\smash{{\SetFigFontNFSS{25}{30.0}{\rmdefault}{\mddefault}{\updefault}{\color[rgb]{0,0,0}$p$}%
}}}}
\put(5986,-736){\makebox(0,0)[b]{\smash{{\SetFigFontNFSS{25}{30.0}{\rmdefault}{\mddefault}{\updefault}{\color[rgb]{0,0,0}$q$}%
}}}}
\put(4681,-421){\makebox(0,0)[b]{\smash{{\SetFigFontNFSS{17}{20.4}{\rmdefault}{\mddefault}{\updefault}{\color[rgb]{0,0,0}$X,\undefd$}%
}}}}
\put(3151, 29){\makebox(0,0)[rb]{\smash{{\SetFigFontNFSS{17}{20.4}{\rmdefault}{\mddefault}{\updefault}{\color[rgb]{0,0,0}$Y,1$}%
}}}}
\put(6391, 29){\makebox(0,0)[lb]{\smash{{\SetFigFontNFSS{17}{20.4}{\rmdefault}{\mddefault}{\updefault}{\color[rgb]{0,0,0}$Y,-2$}%
}}}}
\end{picture}%
}
\end{center}
\vspace{-10mm}
\qed
\end{example}
The problem of computing successors is also considered in~\cite{RSJM05},
i.e., computing a $\wpds$-automaton $\A$ where $v_\A(c)$ is
the meet-over-all-paths
from an initial configuration $c_0$ to $c$. Using our technique,
this result can also be extended to unbounded
semirings;
Appendix~\ref{app:post}
shows an equation system for this
problem, which can be converted into a $\wpds$-automaton
for $\mathit{post}^*(c_0)$ in analogous
fashion.

\section{Applications}
\label{sec:apps}
Here we outline some applications of the
theory developed in this paper.
Unless stated otherwise, we will consider the
semiring $\sint$
as described in Example~\ref{ex:semirings}.
Following Remark~\ref{rem:all-witnesses}
and Corollary~\ref{cor:intring},
we assume that all
nonterminating components can be detected in this semiring and
the corresponding transitions in the $\wpds$-automaton will be
assigned the value $\undefd$. The terminating components
resp. the corresponding transitions in the $\wpds$-automaton
take the computed value.

Note that the previously known approaches to reachability in weighted pushdown
automata are not applicable to any of the below presented cases because
they required the semiring to be bounded (no infinite descending chains).
Boundedness is, however, not satisfied in any of our applications.
Our first two applications are new and we are not aware of any other
algorithms that could achieve the same results. 
Our third application deals with shape-balancedness of context-free
languages, a problem for which an algorithm was recently
described in~\cite{TozawaM07}.


\paragraph{Memory Allocations in Linux Kernel.}
Correct memory allocation and deallocation is crucial for the proper
functionality of an operating system. In Linux the library {\tt linux/gfp.h}
is used for allocation and deallocation of kernel memory pages via
the functions {\tt alloc\_pages} and {\tt \_free\_pages} respectively.
The functions which are argumented with a number $n$ (also called
the \emph{order}) allocate or
deallocate $2^{n}$ memory pages.
Citing~\cite[page 187]{linuxbook}:``You must be careful to free only pages
you allocate. Passing the wrong {\sf struct page} or address, or
the incorrect \emph{order}, can result in corruption.''
This means that a basic safety requirement is: never free more pages than
what are allocated.

As most questions about real programs
are in general undecidable, several techniques
have been suggested to provide more tractable models.
For example so-called boolean programs~\cite{BouajjaniE06} have recently
been used to provide a suitable abstraction via pushdown systems.
Assume a given pushdown system abstraction resulting from the
program code. The transitions in the pushdown system
are labelled with the programming primitives, among others the ones
for allocation and deallocation of memory pages. If a given pushdown
transition allocates  $2^n$ memory pages, we assign it the weight
$2^n$; if it deallocates $2^n$ pages, we assign it the weight $-2^n$;
in all other cases the weight is set to $0$.

Now the pushdown abstraction corrupts the memory iff
a configuration is reachable from the given initial
configuration $pX$ with negative weight.
As shown in Section~\ref{sec:wpds}, we can in polynomial
time (w.r.t.\ to the input pushdown system~$\mathcal{W}$) construct
a $\mathcal{W}$-automaton $\mathcal{A}$ for
$\mathit{post}^*(\{pX\})$. For technical convenience, we first replace
all occurrences of $\undefd$ in $\mathcal{A}$ with $-\infty$.
From all initial control-states of $\mathcal{A}$ we now
run e.g. the Bellman-Ford shortest path algorithm (which can detect negative
cycles and assign the weight to $-\infty$ should there be such)
to check whether there is a path going to some accept state with
an accumulated negative weight. This is doable in polynomial time.
If a negative weight path is found this means that
the corresponding configuration is reachable with a negative weight,
hence there is
a memory corruption (at least in the pushdown abstraction).
Otherwise, the system is safe.
All together our technique gives a polynomial time algorithm for checking
memory corruption with respect to the size of the abstracted pushdown system.
Also depending on whether under- or over-approximation is used
in the abstraction step, our technique can be used for detecting
errors or showing the absence of them, respectively.


\paragraph{Correspondence Assertions.}
\label{app:ca}

In \cite{woo93semantic} Woo and Lam analyze protocols using the so-called
\emph{correspondences} between protocol points. A correspondence property
relates the occurrence of a transition to an earlier occurrence of some
other transition. In sequential programs (modelled as pushdown systems)
assume that assertions of
the form ${\tt begin}~\ell$ and ${\tt end}~\ell$ (where $\ell$ is a label
taken from a finite set of labels) are inserted by the programmer into
the code. The program is \emph{safe} if for each ${\tt end}~\ell$ reached at a
program point there is a unique corresponding ${\tt begin}~\ell$ at an
earlier execution point of the program.
Verifying safety via correspondence assertions can be done using
a similar technique as before. For each label $\ell$ we create
a weighted pushdown system based on the initially given boolean
program abstraction where every instruction ${\tt begin}~\ell$ has
the weight $+1$, every instruction ${\tt end}~\ell$ the weight $-1$,
and all other instructions have the weight $0$. Now the pushdown system
is safe if and only if
every reachable configuration has nonnegative accumulated weight.
This can be verified in polynomial time as outlined above.


\paragraph{Shape-Balancedness of Context-Free Languages.}
In static analysis of programs generating XML strings
and in other XML-related questions, the balancedness
problem has been recently studied
(see e.g.~\cite{BerstelB02,KirkegaardM06,MinamideT06}).
The problem is,
given a context-free language with a paired alphabet of opening and
closing tags, to determine whether every word in the language is
properly balanced (i.e. whether every opening tag has a corresponding
closing tag and vice versa). Tozawa and Minamide recently
suggested~\cite{TozawaM07} a polynomial time algorithm for the problem.
Their involved algorithm consists
of two stages and in the first stage they test for the \emph{shape-balancedness}
property, i.e., if all opening tags as well as closing tags are
treated as of the same sort, is every accepted word balanced? Assume
a given pushdown automaton accepting (by final control-states)
the given context-free language. If we label all opening tags
with weight $+1$ and all closing tags with weight $-1$, the shape-balancedness
question is equivalent
to checking (i) whether every accepted word has the weight
equal to $0$ and (ii) whether all configurations on every path to some
final control-state have nonnegative accumulated weights.
Our generic technique provides polynomial time
algorithms to answer these questions.

To verify property (i), we first consider the
semiring
$\sint = (\integerswithinfty,\min,+,\infty,0)$. We now construct in
polynomial time
for the given initial configuration $pX$
a weighted $\mathit{post}^*(\{pX\})$ $\mathcal{W}$-automaton
$\mathcal{A}$, replace all labels $\undefd$ with $-\infty$,
and for each final control-state $q$ (of the pushdown automaton)
we find in $\mathcal{A}$
a shortest path from $q$ to every accept state of $\mathcal{A}$.
This can be done in polynomial
time using e.g. the Bellman-Ford shortest path algorithm, which can
moreover detect negative cycles and set the respective shortest path
to $-\infty$. If any of the shortest paths are different from $0$,
we terminate because the shape-balancedness property is broken.
If the system passes the first test, we run the same procedure once more
but this time with the semiring $(\integers\cup \{-\infty\},
\max,+,-\infty,0)$ and where $\undefd$ is replaced with
$\infty$, i.e., we are searching for the longest
path in the automaton $\mathcal{A}$. Again if at least one
of those paths has the accumulated weight
different from $0$, we terminate with a negative answer.
If the pushdown system passes both our tests, this means that
any configuration in the set $\mathit{post}^*(\{pX\})$ starting with some
final control-state (of the pushdown automaton)
is reachable only with the accumulated weight $0$
and we can proceed to verify property (ii).

For (ii), we construct
the weighted $\mathit{post}^*(\{pX\})$ $\mathcal{W}$-automaton for
the integer semiring $\sint$. Now we restrict the automaton to
contain only those configurations that can really involve into some
accepting configuration by simply intersecting it (by the usual product
construction)
with the unweighted $\mathcal{W}$-automaton (of polynomial size)
representing $\mathit{pre}^*((q_1 + \cdots + q_n)\Gamma^*)$
where $q_1, \ldots, q_n$ are all final control-states and
$\Gamma$ is the stack alphabet.
Property (ii) now reduces to checking whether the product automaton
accepts some configuration with negative weight, which can be answered
in polynomial time using the technique described in our first
application.

Unfortunately, \cite{TozawaM07} provides no
complexity analysis other than the statement that the algorithm is
polynomial. Our general-purpose algorithm, on the other hand, immediately
provides a precise complexity bound.
Consider a given context-free
grammar of size $n$ over some paired alphabet. It can be (by the standard
textbook construction) translated into a (weighted) pushdown automaton of size
$O(n)$ and moreover with a constant number of states.
As mentioned in Section~\ref{sec:wpds}, this automaton can be normalized
in linear time and we can then build a weighted $\mathit{post}^*(\{pX\})$
$\mathcal{W}$-automaton, of size $O(n^2)$ with $O(n)$ states and in time
$O(n^4)$.
Details can be found in Appendix~\ref{app:post}.
Now running the Bellman-Ford algorithm twice in order to verify property (i)
takes only the time $O(n^3)$. In property (ii) the Bellman-Ford algorithm
is run on a product of the weighted $\mathit{post}^*$ automaton and
an unweighted $\mathit{pre}^*$ automaton, which has only a constant
number states. Hence the size of the product is still $O(n^2)$ and Bellman-Ford
algorithm will run in time $O(n^3)$ as before. This gives the total running time of~$O(n^4)$.

\section{Conclusion}
\label{sec:conclusion}
We presented a unified framework how to deal with interprocedural
dataflow analysis on weighted pushdown automata where the weight
domains might contain infinite descending chains. The problem was
solved by reformulating it
via generalized fixed-point equations which required
polynomials of degree two. To the best of our knowledge this is
the first approach that enables to handle this kind of domains.
On the other hand,
we do not consider completely general idempotent semirings as
we require that the elements in the domain are totally ordered
and that extend preserves inequality. Nevertheless, we showed that
our theory is still applicable. Already the reachability
analysis of weighted pushdown automata over the integer semiring,
one particular instance of our general framework,
was not known before and we provided several examples of its 
potential use in verification.

Regarding the two restrictions we introduced,
we claim that the first
condition of total ordering can be relaxed to orderings of bounded
width, where the maximum number of incomparable elements is bounded
by some a priori given constant $c$. By running the main loop in
Algorithm~\ref{alg:computation} $cn+1$ times, we should be able
to detect nontermination also in this case. The motivation for
introducing bounded width comes from the fact that this will allow us
to combine (via the product construction) one unbounded domain, like
e.g. the integer semiring, with
a fixed number of finite domains in order to observe additional properties
along the computations.
The question whether the second restriction (extend preserves inequality)
can be relaxed as well remains open and is a part of our future work.



\bibliographystyle{plain}
\bibliography{bibliography}

\newpage
\section*{Appendix}
\appendix
\section{Proof of Theorem~\ref{thm:dg}}
\label{proof:thm}

The statement about the runtime follows immediately from our definition of~$K(\vf)$.
If Algorithm~\ref{alg:computation} returns a fixed point, it is the greatest fixed point as it is the result of Kleene's iteration.
It remains to show that if the algorithm returns the statement of line~\ref{line:witness} then this statement in fact holds.

For that purpose we introduce the concept of derivation trees that was also used in~\cite{EKL07:stacs,EKL07:dlt}.
It generalizes the well-known notion from language theory to semirings.
In the following we identify a node $x$ of a tree $t$ with the subtree of $t$ rooted at $x$.
In particular, we identify a tree with its root.

\begin{definition}[Derivation Tree]\label{def:devtree}
Let $\vf$ be a vector of $n$ polynomials. 
A {\em derivation tree} $t$ of $\vf$ is an ordered finite tree whose nodes are labelled with both
 a variable $X_i$ ($1 \le i \le n$) and a monomial $m$ of~$\vf_i$.
We write $\lambda_v$, resp.\ $\lambda_m$ for the corresponding labelling-functions.
If $\lambda_m(x) = a_1 X_{i_1} a_2 \ldots X_{i_s} a_{s+1}$ for some $s \ge 0$,
 then $x$ has exactly $s$ children $x_1,\ldots,x_s$, ordered from left to right,
 with $\lambda_v(x_j)=X_{i_j}$ for all $j = 1, \ldots, s$.
\end{definition}

Notice that a node $x$ in a derivation tree is a leaf if and only if $\lambda_m(x) = a$ for some constant $a \in \domain$.
The \emph{height} $h(t)$ of a derivation tree $t$ is the length of a longest path from the root to a leaf.
For the length, we count the number of nodes on the path including both the root and the leaf.
The \emph{yield} $\yi(t)$ of a derivation tree $t$ with  $\lambda_m(t) = a_1X_{i_1}a_2\cdots X_{i_s}a_{s+1}$ is inductively defined as
 $\yi(t) = a_1\yi(t_1)a_2\cdots \yi(t_s)a_{s+1}$.
Figure~\ref{fig:tree} shows a derivation tree for our running example.
\begin{figure}
\begin{center}
\scalebox{0.3}{\input{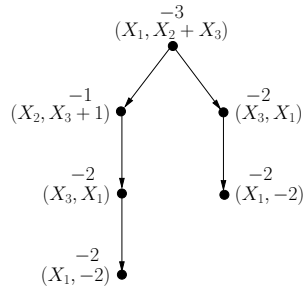}}
\end{center}
\caption{A derivation tree of height $4$ for $\vf = ( \min\{-2, X_2 + X_3\}, \ X_3 + 1, \ \min\{X_1,X_2\} )$.
 The labels of a node $x$ are denoted by $(\lambda_v(x),\lambda_m(x))$.
 The yields are written on top on the labels.}
\label{fig:tree}
\end{figure}
%

The following proposition
 is easy to prove by induction on the height (see also~\cite{EKL07:dlt}).
\begin{proposition} \label{prop:der-tree-height}
 Let $\vf$ be a vector of $n$ polynomials over a semiring.
 For all $k \in \{1,2,\ldots\}$ and all $1 \le i \le n$ we have
 \[
  \ks{k}_i = \combineBig \left\{ \, \yi(t) \mid t \text{ is a derivation
tree of $\vf$ with } h(t) \le k \text{ and } \lambda_v(t) = X_i \, \right\} \;.
 \]
\end{proposition}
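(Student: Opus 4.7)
The plan is to prove the equality by induction on~$k$, combining the algebraic axioms of an idempotent semiring (Definition~\ref{def:semiring}) with the combinatorics of derivation trees (Definition~\ref{def:devtree}).

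For the base case $k=1$, I would expand $\ks{1}_i = \vf_i(\vzero)$. Since $\combineNeutral$ is an annihilator for $\extend$ (item~3 of Definition~\ref{def:semiring}), every monomial of $\vf_i$ containing at least one variable evaluates to $\combineNeutral$ on $\vzero$, so $\ks{1}_i$ equals the combine of the constant monomials of $\vf_i$. By Definition~\ref{def:devtree}, any derivation tree $t$ with $h(t)\le 1$ and $\lambda_v(t)=X_i$ must be a single node, whose $\lambda_m$-label is a constant monomial~$a$ of~$\vf_i$ and whose yield is $\yi(t)=a$. The two descriptions coincide.

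For the inductive step, I assume the claim for~$k$ and write $\vf_i = \combineBig_j m_j$ as a combine of monomials $m_j = a_1 X_{i_1} a_2 \cdots a_s X_{i_s} a_{s+1}$. Substituting the inductive hypothesis for each $\ks{k}_{i_\ell}$ in $m_j(\ks{k})$ and applying distributivity of $\extend$ over $\combine$ (item~2 of Definition~\ref{def:semiring}) repeatedly, I can pull all the combines to the front:
\[
  m_j(\ks{k}) \;=\; \combineBig_{t_1,\ldots,t_s}\ a_1\,\yi(t_1)\,a_2\,\cdots\,a_s\,\yi(t_s)\,a_{s+1},
\]
where each $t_\ell$ ranges over derivation trees of height at most~$k$ with $\lambda_v(t_\ell) = X_{i_\ell}$. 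By Definition~\ref{def:devtree}, the inner expression is exactly $\yi(t)$ for the derivation tree~$t$ obtained by attaching children $t_1,\ldots,t_s$ below a root with $\lambda_v(t)=X_i$ and $\lambda_m(t)=m_j$; conversely, every derivation tree of height at most $k+1$ with $\lambda_v$-root $X_i$ and $\lambda_m$-root $m_j$ arises in this way. Taking the combine over~$j$ and using $\ks{k+1}_i = \vf_i(\ks{k})$ then yields the claim at level $k+1$.

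The only point that truly needs attention is the height bookkeeping: the convention that a single-node tree has height~$1$ (and is a leaf exactly when $\lambda_m$ is a constant) is precisely what makes children of height $\le k$ combine into a tree of height $\le k+1$, and it is what makes the base case come out right. A smaller sanity check is the degenerate case $s=0$ in the inductive step (a constant monomial $m_j$), where the indexed combine on the right-hand side has no ranging variables and collapses to the single value of the constant, matching the single-leaf contribution on the tree side. With these two boundary checks in place, the induction goes through without further difficulty.
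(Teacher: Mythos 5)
Your proof is correct and follows exactly the route the paper intends: the paper merely asserts that the proposition ``is easy to prove by induction on the height'' (pointing to~\cite{EKL07:dlt}), and your induction on~$k$, using annihilation by $\combineNeutral$ for the base case and distributivity of $\extend$ over $\combine$ for the step, is precisely that argument carried out in detail. The boundary checks you flag (height-$1$ trees being exactly the constant monomials, and the $s=0$ case in the step) are indeed the only delicate points, and you handle them correctly.
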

Notice that the set of yields in Proposition~\ref{prop:der-tree-height} is always finite and may be empty.
If it is empty we set $\combineBig \emptyset = \vzero$.
%
Now we prove the following lemma from which the correctness of Algorithm~\ref{alg:computation} follows immediately.

\begin{lemma}
 Let $\vf$ be a vector of $n$ polynomials over a totally ordered idempotent semiring such that extend preserves inequality.
 Let $(\ks{i})_{i\in\naturals}$ denote its Kleene sequence.
 If $\ks{n}_i \ne \ks{n+1}_i$ for some $1 \le i \le n$ then $i$ is a witness component.
\end{lemma}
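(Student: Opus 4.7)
The plan is to adapt Bellman-Ford's negative-cycle argument to the derivation-tree framework of Proposition~\ref{prop:der-tree-height}. Totality of $\sqsubseteq$ combined with Proposition~\ref{prop:kleene}(a) sharpens the hypothesis to $\ks{n+1}_i \sqsubset \ks{n}_i$, and because $\combineBig$ coincides with the minimum in a total order, Proposition~\ref{prop:der-tree-height} supplies some derivation tree with root variable $X_i$ whose yield is strictly below $\ks{n}_i$. I would fix such a tree $t$ of \emph{minimum size} (number of nodes). Since any tree of height at most $n$ with root $X_i$ has yield $\sqsupseteq \ks{n}_i$, we must have $h(t) \ge n+1$, so some root-to-leaf path in $t$ has length at least $n+1$.

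On that path the $n+1$ variable labels cannot all be distinct, so pigeonhole yields nodes $x_1 \prec x_2$ labelled with the same variable. Writing $t = C[t_1] = C[D[t_2]]$ for the subtrees $t_1, t_2$ rooted at $x_1, x_2$ and the single-hole contexts $C, D$, the context yields factor as $\yi(C[z]) = L_C z R_C$ and $\yi(D[z]) = L_D z R_D$. Define the pumped family $s_j := C[D^{\,j}[t_2]]$ for $j \ge 0$, with $\sigma_j := L_D^{\,j} \gamma R_D^{\,j}$ and $\gamma := \yi(t_2)$; then $\yi(s_j) = L_C \sigma_j R_C$, every $s_j$ has root $X_i$, and $s_1 = t$. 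The critical observation is that the contracted tree $s_0 = C[t_2]$ has strictly fewer nodes than $t$ (since $x_1 \prec x_2$ strictly, at least one node of $t$ is lost in the contraction), so size-minimality of $t$ forces $\yi(s_0) \sqsupseteq \ks{n}_i$, whereas $\yi(s_1) = \yi(t) \sqsubset \ks{n}_i$; totality therefore gives $\yi(s_1) \sqsubset \yi(s_0)$.

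Next I would propagate this strict inequality through the pump. Because $\combineNeutral$ annihilates extend and $\yi(t) \ne \combineNeutral$, none of $L_C, R_C, L_D, R_D, \gamma$ equal $\combineNeutral$. Monotonicity of extend (Lemma~\ref{lem:properties}(ii)) together with totality rules out $\sigma_0 \sqsubseteq \sigma_1$ (which would force $\yi(s_0) \sqsubseteq \yi(s_1)$), so $\sigma_1 \sqsubset \sigma_0$; then the recurrence $\sigma_{j+1} = L_D \sigma_j R_D$, combined with monotonicity (to retain $\sigma_{j+1} \sqsubseteq \sigma_j$) and ``extend preserves inequality'' (to promote it to strict, using that the factors are non-$\combineNeutral$), yields by induction an infinite strictly descending chain $\cdots \sqsubset \sigma_2 \sqsubset \sigma_1 \sqsubset \sigma_0$, and consequently $\cdots \sqsubset \yi(s_2) \sqsubset \yi(s_1)$.

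The final step, which I expect to be the main source of subtlety, is to convert this infinite chain of distinct yields into infinitely many distinct values of $\ks{k}_i$, since a priori $\ks{k}_i$ could sit strictly below the entire chain and stabilize there. I would argue by contradiction: if $\ks{k}_i$ stabilized at some $v$, then by stabilization no tree with root $X_i$ has yield $\sqsubset v$, and some min-size tree $t^{(v)}$ with root $X_i$ attains $\yi(t^{(v)}) = v$. If $h(t^{(v)}) \le n$, then $\ks{n}_i \sqsubseteq v$, contradicting $v \sqsubseteq \ks{n+1}_i \sqsubset \ks{n}_i$; otherwise the identical pigeonhole-and-pump recipe applied to $t^{(v)}$, combined with size-minimality and the same propagation argument, produces a pumped tree with root $X_i$ whose yield is strictly below $v$, contradicting the stabilization assumption. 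Hence $\ks{k}_i$ never stabilizes and $\{\ks{k}_i : k \ge 0\}$ is infinite, so $i$ is a witness component.
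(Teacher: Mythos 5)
Your proof is correct, and it rests on the same engine as the paper's: derivation trees and Proposition~\ref{prop:der-tree-height}, a pigeonhole on a root-to-leaf path of length exceeding $n$ to locate two nodes with the same variable label, subtree replacement, and totality plus ``extend preserves inequality'' to propagate a strict comparison. The difference lies in how the argument is closed. The paper proves the local claim that any strict decrease $\ks{k}_i \ne \ks{k-1}_i$ with $k>n$ forces a strictly smaller Kleene value at some later index: it extracts a tree of height exactly $k$ realizing $\ks{k}_i$, shows the ``shorten the tree'' case would contradict the strict decrease, pumps up once, and then iterates the whole claim, re-extracting a fresh tree at every stage. You instead pump a single tree infinitely often, using node-count minimality (rather than exact height) to exclude the shortening case, and then---correctly recognizing that an infinite strictly descending chain of yields does not by itself make $\{\ks{k}_i \mid k\ge 0\}$ infinite, since the Kleene values could in principle stabilize below the entire chain---you add a separate contradiction argument: assume stabilization at $v$, take a size-minimal tree attaining $v$ (which exists because the combine in Proposition~\ref{prop:der-tree-height} is a minimum over a finite set in a total order), and pump it once to obtain a yield strictly below $v$. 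That final step addresses head-on exactly the subtlety that the paper's formulation of its claim is designed to sidestep, and your handling is sound. One small point to tighten: to keep applying ``extend preserves inequality'' along the pump you need $\sigma_j \ne \combineNeutral$ for every $j$, not only for the explicit factors; this does follow, since $\sigma_{j+1} \sqsubseteq \sigma_j$ and $\combineNeutral$ is the top element of $\sqsubseteq$, so $\sigma_{j+1} = \combineNeutral$ would force $\sigma_j = \combineNeutral$, contradicting the base case.
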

\begin{proof}
In this proof we write $a \sqsubset b$ to denote that
$a \sqsubseteq b$ and $a \ne b$.
%
We first show the following: 
\begin{equation} \label{eq:claim}
\text{
If $\ks{k}_{i} \sqsubset \ks{k-1}_{i}$ for some $k > n$ then $\ks{k'}_{i} \sqsubset \ks{k}_{i}$ for some $k' > k$.
}
\end{equation}

Let $\ks{k}_{i} \sqsubset \ks{k-1}_{i}$.
By Proposition~\ref{prop:der-tree-height} and using the total
order of the semiring,
 there is a tree~$t$ with $\lambda_v(t) = X_i$ such that $\ks{k}_i = \yi(t)$ and $h(t) = k > n$.
So there is a path in~$t$ from the root to a leaf and some variable $X_j$ 
with two
 nodes $x_1,x_2$ on the path such that $\lambda_v(x_1) = \lambda_v(x_2) = X_j$.
Assume w.l.o.g.\ that $x_1$ is closer to the root than $x_2$.
As $\sqsubseteq$ is a total order, one of the following holds.
\begin{itemize}
 \item
  If $\yi(x_2) \sqsubseteq \yi(x_1)$ then construct a tree~$t'$ from $t$ by replacing the subtree rooted at~$x_1$
   by the subtree rooted at~$x_2$.
  We have $\lambda_v(t') = X_i$ and $h(t') = k'$ for some $k' < k$.
  By monotonicity of $\extend$ (Lemma~\ref{lem:properties} part (ii)) we have $\yi(t') \sqsubseteq \yi(t)$.
  So $\yi(t) = \ks{k}_i \stackrel{\text{Prop.~\ref{prop:kleene}(a)}}{\sqsubseteq} \ks{k'}_i
   \stackrel{\text{Prop.~\ref{prop:der-tree-height}}}{\sqsubseteq} \yi(t') \sqsubseteq \yi(t)$.
  Hence, $\ks{k}_i = \ks{k'}_i$ which, by Prop.~\ref{prop:kleene}(a), implies $\ks{k}_i = \ks{k-1}_i$.
  This contradicts the assumption that $\ks{k-1}_i \ne \ks{k}_i$.
  So this case does not occur.
 \item
  If $\yi(x_1) \sqsubset \yi(x_2)$
   then construct a tree~$t'$ from $t$ by replacing the subtree rooted at~$x_2$ by the subtree rooted at~$x_1$.
  We have $\lambda_v(t') = X_i$ and $h(t') = k'$ for some $k' > k$.
  By monotonicity of $\extend$ (Lemma~\ref{lem:properties} part (ii)) and as extend preserves inequality
   we have $\ks{k'}_i \sqsubseteq \yi(t') \sqsubset \yi(t) = \ks{k}_i$.
  So $\ks{k'}_i \sqsubset \ks{k}_i$.
\end{itemize}
This proves our claim~(\ref{eq:claim}).

It follows from the claim and Proposition~\ref{prop:kleene}(a) that
 if $\ks{k}_{i} \sqsubset \ks{k-1}_{i}$ for some $k > n$ then $\ks{l}_{i} \sqsubset \ks{l-1}_{i}$ for some $l > k$.
Hence, if $\ks{n}_i \ne \ks{n+1}_i$ then $\{ \ks{k}_i \mid k \in \naturals\}$ is infinite.
This completes the proof.
\qed
\end{proof}

\section{Proofs of Lemma~\ref{lem:fofc} and Lemma~\ref{lem:goft}}
\label{proof:lem}

Lemma~\ref{lem:fofc} claims that in the equation system~(\ref{e:fofc})
the following holds
for every $k\ge1$ and any configuration~$c$:
$$\ks{k}_{\sema{c}}=\bigoplus\{\,v(\sigma)\mid c\pto{\sigma}c_f, \ |\sigma|<k\,\}$$
This follows directly from 
Proposition~\ref{prop:der-tree-height}, and because every derivation
tree of height~$k$ for~(\ref{e:fofc}) corresponds to a sequence of $k-1$ moves
in the WPDS.\qed

\smallskip

Lemma~\ref{lem:goft} claims that in the equation system~(\ref{e:goft})
the following holds for every $k\ge1$, control states $p,q$, and
stack symbol~$X$:
$$\bigoplus\{\,v(\sigma)\mid c\pto{\sigma}c_f, \ |\sigma|\le 2^{k-1}\,\}
\sqsubseteq
\ks{k}_{\sema{pXq}}
\sqsubseteq
\bigoplus\{\,v(\sigma)\mid c\pto{\sigma}c_f, \ |\sigma|\leq k-1\,\}$$
A derivation tree of height~$k$ for (\ref{e:goft})
corresponds to a path in~$\wpds$ whose length is at least $k-1$ (if all
internal nodes have just one child) and at most $2^{k-1}$ (if all internal
nodes have two children). Because of this, and
because of Proposition~\ref{prop:der-tree-height}, the lemma holds.\qed

\section{Computing Successors in Weighted Pushdown Systems}
\label{app:post}

In Section~\ref{sec:wpds}, we considered the following problem:
given a \emph{target} configuration~$c_f$, compute (if possible)
the meet-over-all-paths from $c$ to $c_f$, for any configuration~$c$.
In other words, we considered the \emph{predecessors} of $c_f$.

Alternatively, one could consider the \emph{successors} of some
\emph{source} configuration~$c_s:=p_sX_s$ and attempt to compute the meet
over all paths from $c_s$ to $c$. It is possible to adapt the methods
from Section~\ref{sec:wpds} to this problem (and in fact, this
adaptation is used by our applications).

It is well-known that most results about backward pushdown reachability
carry over to forward pushdown reachability, and vice versa. The easiest
explanation for this is that given a WPDS $\wpds$, one can construct another
WPDS~$\wpds'$ which makes the movements of $\wpds$ `in reverse'. More precisely,
if $\wpds$ has control states $P$, stack alphabet $\Gamma$, and rules~$\Delta$,
then $\wpds'$ has control states
$P':=P\cup\{\,(q,Y)\mid \exists (pX\definesto{d} qYZ)\in\Delta\,\}$,
stack alphabet
$\Gamma\cup\{\#\}$, and the following rules:
\begin{itemize}
\item if $pX\definesto{d} qY\in\Delta$, then $qY\definesto{d} pX\in\Delta'$;
\item if $pX\definesto{d} q\epsilon\in\Delta$, then
	$qY\definesto{d}pXY\in\Delta'$ for every $Y\in\Gamma\cup\{\#\}$;
\item if $pX\definesto{d} qYZ\in\Delta$, then
	$qY\definesto{\extendNeutral}(q,Y)\epsilon$ and $(q,Y)Z\definesto{d}pX$
	in $\Delta'$.
\end{itemize}
It is easy to see that whenever $p\alpha\pto{\sigma}q\beta$ holds in $\wpds$,
then $q\beta\#\pto{\tau}p\alpha\#$ holds for some rule sequence~$\tau$
in~$\wpds'$ such that, if $\sigma=r_1\ldots r_n$ and $\tau=s_1\ldots s_m$,
then $d_{r_1}\extend\cdots\extend d_{r_n}=d_{s_m}\extend\cdots\extend d_{s_1}$.
Thus, it is possible to reduce forward reachability problems to backward
reachability problems, and the reduction is polynomial.

It is also possible to tackle the forward reachability problem directly,
in which case slightly better complexity bounds can be achieved, see,
for instance~\cite{EsparzaHRS00,RSJM05}. Following the ideas
from~\cite{EsparzaHRS00,RSJM05}, we will present a finite equation system
that serves as the `forward analogy' of (\ref{e:goft}), without proof.
Our system has the following sets of variables:
\begin{itemize}
\item $\sema{pX\bullet}$, for $p\in P$ and $X\in\Gamma$,
      representing the weights of the paths from $p_sX_s$ to $pX$;
\item $\sema{pX(rZ)}$, for $p\in P$, $X\in\Gamma$, and $(r,Z)\in P'$,
      representing the weights of the paths from $rZ$ to $pX$;
\item $\sema{p\epsilon\bullet}$, for $p\in P$,
      representing the weights of the paths from $p_sX_s$ to $p\epsilon$;
\item $\sema{p\epsilon(rZ)}$, for $p\in P$ and $(r,Z)\in P'$,
      representing the weights of the paths from $rZ$ to $p\epsilon$;
\item $\sema{(pX)Y\bullet}$, for $(p,X)\in P'$ and $Y\in\Gamma$,
      representing the weights of the paths from $p_sX_s$ to $pXY$,
      ending with a `push' operation;
\item $\sema{(pX)Y(rZ)}$, for $(p,X),(r,Z)\in P'$ and $Y\in\Gamma$,
      representing the weights of the paths from $rZ$ to $pXY$,
      ending with a `push' operation.
\end{itemize}

Moreover, we define
$I(pX)=\extendNeutral$ iff $pX=p_sX_s$ and $\combineNeutral$ otherwise,
and $E(pX,rZ)=\extendNeutral$ iff $pX=rZ$ and $\combineNeutral$
otherwise, for $(p,X)\in P'$. The equation system is as follows:

\begin{eqnarray*}
\sema{pX\bullet} & = & I(pX) \combine \bigoplus_{qY\definesto{d}pX} \big(\sema{qY\bullet}\extend d\big) \combine \bigoplus_{(q,Y)\in P'} \big(\sema{(qY)X\bullet} \extend \sema{p\epsilon(qY)}\big)\\
\sema{pX(rZ)} & = & E(pX,rZ) \combine \bigoplus_{qY\definesto{d}pX} \big(\sema{qY(rZ)}\extend d\big) \combine \bigoplus_{(q,Y)\in P'} \big(\sema{(qY)X(rZ)} \extend \sema{p\epsilon(qY)}\big)\\
\sema{p\epsilon\bullet} & = & \bigoplus_{qY\definesto{d}p\epsilon} \big(\sema{qY\bullet}\extend d\big)\\
\sema{p\epsilon(rZ)} & = & \bigoplus_{qY\definesto{d}p\epsilon} \big(\sema{qY(rZ)}\extend d\big)\\
\sema{(pX)Y\bullet} & = & \bigoplus_{qU\definesto{d}pXY} \big(\sema{qU\bullet}\extend d\big)\\
\sema{(pX)Y(rZ)} & = & \bigoplus_{qU\definesto{d}pXY} \big(\sema{qU(rZ)}\extend d\big)
\end{eqnarray*}

Intuitively, the right-hand sides of the equations list the possible ways
in which the paths corresponding to the left-hand-side variables can be
generated.

In analogy with Section~\ref{sub:wa}, any solution of $\boldsymbol{h}$ can be
converted into a $\mathit{post}^*(\{p_sX_s\})$
$\wpds$-automaton. Our automaton~$\A$ has $\epsilon$-edges,
and its states are $P'$ extended with a final state~$\bullet$. Every variable
$\sema{sXs'}$, where $s,s'\in P'\cup\{\bullet\}$ and
$X\in\Gamma\cup\{\epsilon\}$, and its value
in the solution then correspond to a transition of~$\A$.
The meet-over-all-paths for every configuration~$c$ can be obtained by
identifying the paths on which $c$ is accepted by $\A$ and computing
$v_\A(c)$.

\begin{remark}
\label{rem:post}
According to~\cite{Sch02b,RSJM05}, the size of the equation system
and the number of variables is $\mathcal{O}(|P|\cdot|\Delta|^2)$,
therefore the time for Algorithm~\ref{alg:computation} is
$\mathcal{O}(|P|^2|\Delta|^4)$. The resulting automaton has got
$\mathcal{O}(|P|+|\Delta|)$ states.
\end{remark}

\end{document}